\newtheorem{fact}{Fact}
\begin{document}

\title{Hierarchy of persistence \\with respect to the length of action's disability}

\author{Kamila Barylska\thanks{The study is founded 
by the Polish National Science Center
(grant No.2013/09/D/ST6/03928).}\inst{1}\and Edward Ochma\'{n}ski\inst{2,3}\\
\email\{{khama,edoch\}{@}mat.umk.pl}}

\institute{Faculty of Mathematics and Computer Science, Nicolaus Copernicus University,\\ Chopina 12/18, 87-100 Toru\'{n}, Poland
\and Faculty of Mathematics and Computer Science, Nicolaus Copernicus University,\\ Chopina 12/18, 87-100 Toru\'{n}, Poland - retired
\and Institute of Computer Science, Polish Academy of Sciences,\\Jana Kazimierza 5, 01-248 Warszawa, Poland - retired
}

\maketitle

\begin{abstract}
The notion of persistence, based on the rule "no action can disable another one" is one of the classical notions in concurrency theory. It is also one of the issues discussed  in the Petri net theory. We recall two ways of generalization of this notion: the first is "no action can kill another one" (called l/l-persistence) and the second "no action can kill another enabled one" (called the delayed persistence, or shortly e/l-persistence). Afterwards we introduce a more precise notion, called e/l-k-persistence, in which one action disables another one for no longer than a specified number k of single sequential steps. Then we consider an infinite hie\-rarchy of such e/l-k persistencies. We prove that if an action is disabled, and not killed, by another one, it can not be postponed indefinitely.
Afterwards, we investigate the set of markings in which two actions are enabled simultaneously, and also the set of reachable markings with that feature. We show that the minimum of the latter is finite and effectively computable.
Finally we deal with decision problems about e/l-k persistencies. We show that all the kinds of e/l-k persistencies are decidable with respect to steps, markings and nets.

\begin{keywords}
Petri nets, concurrency, persistence, decision problems
\end{keywords}

\end{abstract}

\section{Introduction}

Petri nets constitute a very useful and suitable tool for concurrent systems modeling. Thanks to them, we can not only model real systems, but also analyze their properties and design systems which fulfill given criteria. For many years, concurrent systems have been examined in the context of their compliance with certain desirable properties, which fits in with the trend of the so-called ethics of concurrent computations. One of the commonly found undesirable properties of concurrent systems is the presence of conflicts, and thus, one of the most desirable properties of them is conflict-freeness. The notion of persistence, proposed by Karp/Miller \cite{KarpMiller} is one of the the most important notions in concurrency theory. It is based on the behaviourally oriented rule "no action can disable another one", and generalizes the structurally defined notion of conflict-freeness.
\\ \\
The notion of persistence is one of the issues frequently discussed in the Petri net theory - \cite{BestDar,Grabowski,Hack,LandRob,Mayr,Koutny} and many others. It is being studied not only in terms of theoretical properties, and also as a useful tool for designing and analyzing software environments \cite{BestEsparza}. In engineering, persistence is a~highly desirable property, especially in a case of designing systems to be implemented in hardware. Many systems can not work properly without satisfying this property.
\\ \\
We say that an action of a processing system is persistent if, whenever it becomes enabled, it remains enabled until executed. A system is said to be persistent if all its actions are persistent. This classical notion has been introduced by Karp/Miller \cite{KarpMiller}. In section~2.6, we show two generalizations of the classical notion (defined in \cite{BarOch}): l/l-persistence and e/l-persistence which form the following hierarchy: $P_{e/e}\subseteq P_{l/l}\subseteq P_{e/l}$. An action is said to be l/l-persistent if it remains live until executed, and is e/l-persistent if, whenever it is enabled, it cannot be killed by another action. For uniformity, we name the traditional persistence notion e/e-persistence. Next, we recall that those kinds of persistence are decidable in place/transition nets.
\\ \\
In section 3.1, we extend the hierarchy mentioned above with an infinite hierarchy of e/l-persistent steps. A step $MaM'$ is said to be \emph{e/l-k-persistent} for some k~$\in \mathbb{N}$ if the execution of an action $a$ pushes the execution of any other enabled action away for at most k steps (more precise: if the execution of an action $a$ stops the enabledness of any other action, then the enabledness is restored not later than after k steps). 
\\ \\
In section 3.2 we study decision problems related to the notion of e/l-k-persistence. These problems include EL-k Step Persistence Problem and EL-k Marking Persistence Problem. We show that both problems are decidable (Theorem~\ref{t45} and Theorem \ref{t46}). 
\\ \\
The next problem we want to focus on is EL-k Net Persistence Problem. In order to check the decidability of the problem we need to take advantage of additional tools and facts.
That is why we investigate the set of markings in which two actions are enabled simultaneously, and also the set of reachable markings with that feature. We show that the minimum of the latter is finite and effectively computable. We also prove that if some action pushes the enabledness of another one away for more than k steps, then it also needs to happen in some minimal reachable marking enabling these two actions. In our proofs we use the decidability of the Set Reachability Problem (from \cite{BarOch}) and also we make use of the theory of residual sets of Valk/Jantzen \cite{ValkJantzen}. Finally, we show that e/l-k-persistence is decidable with respect to nets (Theorem \ref{t418}).

\mbox{ }\\
We also prove (section 3.3) that if an action of an arbitrary p/t-net is disabled (but not killed) by another one, it can not be postponed indefinitely. We show that if a p/t-net is e/l-persistent, then it is e/l-k-persistent for some k $\in \mathbb{N}$ (Theorem~\ref{t410}), and such a number k can be effectively found (Theorem~\ref{t420}). We also point, that the above-cited result does not hold for nets which do not have the monotonicity property (i.e. it is not true that the action enabled in some marking $M$ is also enabled in any marking $M'$ greater than $M$), for example for inhibitor nets.
\\ \\
The concluding section contains some questions and plans for further investigations.
\\ \\
A preliminary version of the paper was presented on the International Workshop on Petri Nets and Software Engineering (Hamburg, Germany, June 25-26, 2012) with electronical proceedings available online at CEUR-WS.org as Volume 851. The present paper is an improved and extended version of it.

\section{Basic Notions}

\subsection{Denotations}
The set of non-negative integers is denoted by $\mathbb{N}$. Given a set $X$, the cardinality (number of elements) of $X$ is denoted by $|X|$, the powerset (set of all subsets) by $2^X$, the cardinality of the powerset is $2^{|X|}$. Multisets over $X$ are members of  $\mathbb{N}^X$, i.e. functions from $X$ into $\mathbb{N}$.

\subsection{Petri Nets and Their Computations}

The definitions concerning Petri nets are mostly based on \cite{DeselReisig}.

\begin{definition}[Nets]
\label{d21}
\emph{Net} is a triple $\mathrm{N}=(P,T,F)$, where:
\begin{itemize}
\item $P$ and $T$ are finite disjoint sets, of \emph{places} and \emph{transitions}, respectively;
\item $F\subseteq P\times T \cup T \times P$ is a relation, called the \emph{flow relation}.
\end{itemize}
\end{definition}\mbox{ }\\
\indent For all $a\in T$ we denote:\\
\indent$^\bullet a = \{p \in P \mid (p,a) \in F\}$  -  the set of entries to $a$\\
\indent$a^\bullet = \{p \in P \mid (a,p) \in F\}$  -  the set of exits from $a$.\\
\\ \\
\newpage
Petri nets admit a natural graphical representation. Nodes represent places and transitions, arcs represent the flow relation. Places are indicated by circles, and transitions by boxes.
\\ \\
The set of all finite strings of transitions is denoted by $T^*$, the length of $w\in T^*$ is denoted by $|w|$, number of occurrences of a transition $a$ in a string $w$ is denoted by $|w|_a$, two strings $u,v \in T^*$ such that $(\forall a\in T)$ $|u|_a=|v|_a$ are said to be \emph{Parikh equivalent}, which is denoted by $u\equiv v$.
\\
\begin{definition}[Place/Transition Nets]
\label{d22}
\emph{Place/transition net} (shortly, \emph{p/t-net}) is a quadruple $\mathrm{S}=(P,T,F,M_0)$, where:
\begin{itemize}
\item $\mathrm{N}=(P,T,F)$ is a net, as defined above;
\item $M_0 \in \mathbb{N}^\mathrm{P}$ is a multiset of places, named the \emph{initial marking}; it is marked by \emph{tokens} inside the circles, capacity of places is unlimited.
\end{itemize}
\end{definition}\mbox{ }\\
Multisets of places are named \emph{markings}. In the context of p/t-nets, they are mostly represented by nonnegative integer vectors of dimension $|P|$, assuming that $P$ is strictly ordered. The natural generalizations, for vectors, of arithmetic operations $+$ and $-$, as well as the partial oder $ \leqslant$, all defined componentwise, are well known and their formal definitions are omitted here.
\\ \\
In this context, by $^\bullet a (a^\bullet)$ we understand a vector of dimension $|P|$ which has 1 in every coordinate corresponding to a place that is an entry to (an exit from, respectively) $a$ and 0 in other coordinates.
\\ \\
A transition $a \in T$ is \emph{enabled} in a marking $M$ whenever $^\bullet a\leq M$ (all its entries are marked). If $a$ is enabled in $M$, then it can be executed, but the execution is not forced. The execution of a transition $a$ changes the current marking $M$ to the new marking $M'=(M-^\bullet a)+a^\bullet $ (tokens are removed from entries, then put to exits). The execution of an action $a$ in a marking $M$ we call a (sequential) step. We shall denote $Ma$ for the predicate "a is enabled in $M$" and $MaM'$ for the predicate "$a$ is enabled in $M$ and $M'$ is the resulting marking".
\\ \\
This notions and predicates we extend, in a natural way, to strings of transitions: $M\varepsilon M$ for any marking $M$, and $MvaM''$  ($v \in T^*, a \in T$) iff $MvM'$ and $M'aM''$ .
\\ \\
\textbf{Remark:} Wherever this will not lead to confusion, we will use a notation $M'=Ma$ to denote the fact that the action $a$ is enabled in a marking $M$ and a marking $M'$ is the result of the execution of action $a$ in a marking $M$.
\\ \\
If $MwM'$, for some $w \in T^*$, then $M'$ is said to be \emph{reachable from $M$}; the set of all markings reachable from $M$ is denoted by $[M\rangle$ . Given a p/t-net $\mathrm{S}=(P,T,F,M_0)$, the set $[M_0\rangle$ of markings reachable from the initial marking $M_0$ is called the \emph{reachability set} of S, and markings in $[M_0\rangle$  are said to be \emph{reachable} in S.

\mbox{ }\\
A transition $a \in T$ is said to be \emph{live in a marking $M$} if there is a string $u\in T^*$ such that $ua$ is enabled in $M$. A transition $a \in T$ that is not live in a~marking $M$ is said to be \emph{dead in a marking $M$}. Let $M \in [M_0\rangle$ be a marking such that $MaM'$ for some $a\in T$, then if a transition $b \neq a$ is enabled (live) in $M$ and not enabled (not live) in $M'$, we say that (the execution of) $a$ \emph{disables} (\emph{kills}) $b$ in a marking $M$. We also say that an action $a$  \emph{disables} (\emph{kills}) $b$ (in a net $S$) if $a$ \emph{disables} (\emph{kills}) $b$ in some reachable marking $M$.
\\ 
\begin{definition} [Inhibitor nets ]
\label{d23}
\emph{Inhibitor net}  is a quintuple $\mathrm{S} = (P,T,F,I,M_0)$, where:
\begin{itemize}
\item $(P,T,F,M_0)$ is a p/t-net, as defined above;
\item $I\subseteq P\times T$ is the set of inhibitor arcs (depicted by edges ended with a small empty circle). Sets of entries and exits are denoted by $^\bullet a$ and $a^\bullet$, as in p/t-nets; the set of \emph{inhibitor entries} to $a$ is denoted by $^\circ a=\{p\in P \mid (p,a)\in I\}$.
\end{itemize}
\end{definition}
A transition $a \in T$ (of an inhibitor net) is enabled in a marking $M$ whenever $\bullet a\leq M$ (all its entries are marked) and $(\forall p\in{ ^\circ a})$ $M(p)= 0$ - all inhibitor entries to $a$ are empty. The execution of $a$ leads to the resulting marking $M'= (M-^\bullet a)+a^\bullet$.
\\ \\
The following well-known fact follows easily from Definitions \ref{d21} and \ref{d22}.
\begin{fact} [Diamond and big diamond properties]
\label{f24}
Any place/transition net possesses the following property:

\emph{Big Diamond Property}:

If $MuM' \ \& \ MvM'' \ \& \ u\approx v$ (Parikh equivalence), then $M' =M''$.

Its special case with $|u|=|v|=2$ is called the \emph{Diamond Property}:

If $MabM'\ \& \ MbaM''$, then $M' =M''$.
\end{fact}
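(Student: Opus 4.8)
The plan is to reduce both statements to a single, purely arithmetic observation: in a place/transition net the marking produced by a firing sequence is determined by the \emph{multiset} of transitions fired, not by their order. Precisely, I would first prove by induction on $|u|$ that whenever $MuM'$ holds one has
\[
M' \;=\; M \,+\, \sum_{a\in T} |u|_a\,\bigl(a^\bullet - {}^\bullet a\bigr).
\]
The base case $u=\varepsilon$ is immediate since $M\varepsilon M$. For the inductive step write $u=va$ with $MvM_1$ and $M_1 a M'$; the induction hypothesis gives $M_1 = M + \sum_{b\in T}|v|_b(b^\bullet - {}^\bullet b)$, the firing rule gives $M' = (M_1 - {}^\bullet a) + a^\bullet = M_1 + (a^\bullet - {}^\bullet a)$, and since $|u|_b=|v|_b$ for $b\neq a$ while $|u|_a=|v|_a+1$, the two sums combine to the displayed expression. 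Only the reachability $MuM'$ is used in this computation; the intermediate enabledness conditions serve merely to guarantee that each single step is defined, and play no role in the value of $M'$.

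Granting this, the \emph{Big Diamond Property} follows at once. Assume $MuM'$, $MvM''$, and $u\approx v$. Parikh equivalence means $|u|_a=|v|_a$ for every $a\in T$, hence $\sum_{a}|u|_a(a^\bullet - {}^\bullet a)=\sum_{a}|v|_a(a^\bullet - {}^\bullet a)$, and applying the displayed identity to $u$ and to $v$ yields
\[
M' \;=\; M + \sum_{a\in T}|u|_a(a^\bullet - {}^\bullet a) \;=\; M + \sum_{a\in T}|v|_a(a^\bullet - {}^\bullet a) \;=\; M''.
\]
The \emph{Diamond Property} is then simply the special case $u=ab$, $v=ba$: the hypotheses $MabM'$ and $MbaM''$ assert exactly that both two-letter sequences are executable from $M$, and $ab\approx ba$, so $M'=M''$.

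There is no genuine obstacle here; the statement is elementary and the only point requiring a little care is the logical shape of the claim. One must keep in mind that the hypothesis explicitly asserts executability of \emph{both} sequences (otherwise $M'$ or $M''$ would be undefined), and the content of the fact is precisely that once both are executable, the unrestricted additivity of transition effects forces the two resulting markings to coincide — which is what the effect-vector identity above makes transparent.
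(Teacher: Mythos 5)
Your proof is correct: the marking-effect identity $M' = M + \sum_{a\in T}|u|_a\,(a^\bullet - {}^\bullet a)$, established by induction on $|u|$, immediately gives the Big Diamond Property, and the Diamond Property is the special case $u=ab$, $v=ba$, exactly as you argue. The paper gives no proof of Fact~\ref{f24} at all (it is stated as a well-known fact following easily from Definitions~\ref{d21} and~\ref{d22}), and your argument is precisely the routine state-equation verification that this remark alludes to, including the correct observation that enabledness is only needed for the intermediate steps to be defined and does not influence the resulting marking.
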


\begin{definition}
\label{d261}
We say that a Petri net $\mathrm{S}=(P,T,F,M_0)$ has the \emph{monotonicity property} if and only if $(\forall w \in T^*)(\forall M,M' \in \mathbb{N}^{P}) \ Mw \land M\leq M' \Rightarrow M'w$.
\end{definition}

\begin{fact}
\label{f27}
P/t-nets have the monotonicity property.
\end{fact}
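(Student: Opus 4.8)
The plan is to prove the statement by induction on the length of $w$, using a slightly strengthened induction hypothesis that also records the resulting marking. Concretely, I would establish: for all $w \in T^*$ and all $M, M' \in \mathbb{N}^{P}$, if $MwN$ for some $N$ and $M \leq M'$, then there exists $N'$ with $M'wN'$ and $N' - N = M' - M$; in particular $N \leq N'$. Fact~\ref{f27} is the special case in which one keeps only the conclusion that $M'w$ holds.

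The base case $w = \varepsilon$ is immediate: $M\varepsilon M$ and $M'\varepsilon M'$ hold by definition, and $M' - M = M' - M$.

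For the inductive step write $w = va$ with $v \in T^*$ and $a \in T$, and assume $MvaN$. By definition there is $M_1$ with $MvM_1$ and $M_1 a N$. Applying the induction hypothesis to $v$ yields $M_1'$ with $M'vM_1'$ and $M_1' - M_1 = M' - M \geq 0$, hence $M_1 \leq M_1'$. Since $a$ is enabled in $M_1$ we have ${}^\bullet a \leq M_1 \leq M_1'$, so $a$ is enabled in $M_1'$ as well; let $M_1' a N'$. By the firing rule $N = (M_1 - {}^\bullet a) + a^\bullet$ and $N' = (M_1' - {}^\bullet a) + a^\bullet$, so $N' - N = M_1' - M_1 = M' - M$, and from $M'vM_1'$ together with $M_1' a N'$ we obtain $M'vaN'$, i.e.\ $M'wN'$. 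This closes the induction.

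The argument is essentially routine; the only subtlety worth flagging is that the componentwise difference $M_1 - {}^\bullet a$ is a genuine element of $\mathbb{N}^{P}$ only after one knows that $a$ is enabled, so enabledness must be propagated step by step along $w$ before any subtraction is performed — which is precisely why carrying the stronger invariant $N' - N = M' - M$ (rather than merely $N \leq N'$) lets the induction go through without friction. Since p/t-nets contain no monotonicity-breaking construct such as inhibitor arcs, there is no genuine obstacle in the inductive step.
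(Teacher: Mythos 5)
Your proof is correct. The paper treats the fact as immediate, arguing only that ``the tokens of $M'-M$ can be regarded as frozen (disactive) tokens''; your induction on $|w|$ with the strengthened invariant $N'-N = M'-M$ is exactly the formalization of that frozen-token observation, so the approach is essentially the same, just spelled out rigorously.
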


\begin{proof}
Obvious, since in p/t-nets the tokens of $M'-M$ can be regarded as frozen (disactive) tokens. 	
\end{proof}
	
\begin{fact}
\label{f28}
Inhibitor nets do not have the monotonicity property.
\end{fact}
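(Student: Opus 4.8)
The plan is simply to exhibit a counterexample, since the monotonicity property of Definition~\ref{d261} fails for inhibitor nets for a very elementary reason: an inhibitor entry imposes an \emph{upper} (``zero'') bound on the number of tokens in a place, and such a bound is obviously not preserved when one enlarges the marking. So the whole argument amounts to making this observation precise on the smallest possible net.

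Concretely, I would take the inhibitor net $\mathrm{S}=(P,T,F,I,M_0)$ with a single place $P=\{p\}$, a single transition $T=\{a\}$, empty flow relation $F=\emptyset$ (so $^\bullet a = a^\bullet = \emptyset$), and one inhibitor arc $I=\{(p,a)\}$, hence $^\circ a=\{p\}$; the initial marking $M_0$ is irrelevant, since Definition~\ref{d261} quantifies over \emph{all} $M,M'\in\mathbb{N}^{P}$, not merely reachable ones. Now pick $w=a\in T^*$, the marking $M$ with $M(p)=0$, and the marking $M'$ with $M'(p)=1$. Then $^\bullet a=\emptyset\leq M$ and $M(p)=0$, so $a$ is enabled in $M$, i.e.\ $Mw$ holds; and clearly $M\leq M'$; but $M'(p)=1\neq 0$, so the inhibitor condition ``$(\forall q\in{}^\circ a)\ M'(q)=0$'' fails, hence $a$ is not enabled in $M'$, i.e.\ $\lnot\,M'w$. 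Thus $Mw\land M\leq M'\not\Rightarrow M'w$, which shows that $\mathrm{S}$ does not have the monotonicity property, proving the fact.

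There is no real obstacle here; the only points worth a sentence of care are (i) confirming that the property as stated ranges over arbitrary markings, so the tiny net above already suffices, and (ii) noting that, if one preferred a witness inside the reachability set, it is enough to add a transition $b$ with $b^\bullet=\{p\}$ (and no inhibitor constraint) so that both $M$ and $M'$ become reachable from a suitable $M_0$; either way the conclusion is the same. One may also remark that this is exactly the feature that breaks the results of Section~3.3 for inhibitor nets, as anticipated in the introduction.
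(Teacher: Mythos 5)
Your proposal is correct and follows essentially the same route as the paper: the paper also proves the fact by exhibiting a small inhibitor net (Fig.~\ref{Fig1}) with markings $M_1 < M_1'$ where $a$ is enabled in $M_1$ but not in $M_1'$, exactly the phenomenon your one-place, one-transition example makes explicit. Your remarks that the quantification is over arbitrary markings (so reachability is irrelevant) and that the argument pinpoints why the Section~3.3 results fail for inhibitor nets are accurate but not needed for the proof.
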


\begin{proof}
Let us look at the example of Fig. \ref{Fig1}. It can be easily seen that $M_1<M_1'$. $M_1a$ holds but $M_1'a$ doesn't hold.	
\end{proof}

\begin{figure}[h]
\centering
\includegraphics[width=0.2\textwidth]{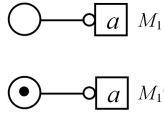}
\caption{Non-monotonic inhibitor net}
\label{Fig1}
\end{figure}

\subsection{Monoid $\mathbb{N}^\mathrm{k}$}
\begin{definition}[Monoid $\mathbb{N}^\mathrm{k}$, rational operations, rational subsets]
\label{def_mon}\\
The monoid $\mathbb{N}^\mathrm{k}$ is the set of $\mathrm{k}$-dimensional non-negative integer vectors with the componentwise addition $+$.
\\
If $X,Y\subseteq \mathbb{N}^\mathrm{k}$ then $X+Y\ =\ \{x+y\ |\ x\in X, y\in Y\}$ and the star operation is defined as $X^*=\bigcup\{X_i | i\in \mathbb{N}\}$, where $X_0=(0,\ldots,0)$ and $X_{i+1}= X_i+X$. The partial order $\leq$ is understood componentwise, and $<$ means $\leq$  and $\neq$ . Rational subsets of $\mathbb{N}^\mathrm{k}$ are subsets built from finite subsets with finitely many operations of union $\cup$ , addition $+$ and star $^*$.
\end{definition}
\begin{theorem}[Ginsburg/Spanier\cite{GinsburgSpanier}]
\label{twGins}
Rational subsets of $\mathbb{N}^\mathrm{k}$ form an effective boolean algebra (i.e. are closed under union, intersection and difference).

\end{theorem}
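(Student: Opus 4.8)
The plan is to reduce everything to the classical identification of rational subsets of $\mathbb{N}^k$ with \emph{semilinear sets} and then verify the Boolean closure properties for semilinear sets while checking that every construction is algorithmic. Recall that a \emph{linear} set has the form $b+\{p_1,\dots,p_n\}^*$ with $b,p_1,\dots,p_n\in\mathbb{N}^k$, and a \emph{semilinear} set is a finite union of linear sets. First I would show rational $=$ semilinear: a semilinear set is obviously rational, and conversely the family of semilinear sets contains the finite sets and is closed under the three rational operations. Closure under $\cup$ is immediate; closure under $+$ follows by distributing the sum over the two finite unions and using $(b+P^*)+(b'+Q^*)=(b+b')+(P\cup Q)^*$; closure under $\,^*$ uses $(b+P^*)^*=\{0\}\cup\bigl(b+(P\cup\{b\})^*\bigr)$ and then extends to finite unions. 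Since rational subsets are precisely those generated from finite sets by $\cup$, $+$ and $\,^*$, this gives the equivalence, and all manipulations transform finite descriptions into finite descriptions effectively.

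The core is closure under \emph{intersection}, and it suffices to intersect two linear sets $L_1=b+\sum_i\mathbb{N}p_i$ and $L_2=c+\sum_j\mathbb{N}q_j$. A vector lies in $L_1\cap L_2$ iff there are $\lambda\in\mathbb{N}^m$, $\mu\in\mathbb{N}^n$ with $b+\sum_i\lambda_i p_i=c+\sum_j\mu_j q_j$, i.e. iff $(\lambda,\mu)$ solves a system of linear Diophantine equations. I would then prove that the set $S\subseteq\mathbb{N}^{m+n}$ of non-negative solutions of such a system is semilinear: the homogeneous solutions form a submonoid of $\mathbb{N}^{m+n}$, which by Dickson's Lemma is generated by finitely many (effectively computable) minimal vectors, and every solution is a minimal particular solution plus a homogeneous one, so $S$ is a finite union of linear sets whose bases are the minimal particular solutions and whose periods are the minimal homogeneous generators. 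Finally $L_1\cap L_2$ is the image of $S$ under the affine map $(\lambda,\mu)\mapsto b+\sum_i\lambda_i p_i$, and the image of a semilinear set under an integer affine map is again semilinear (apply the map to the bases and periods). Intersection of arbitrary semilinear sets then follows by distributing over the unions.

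For closure under \emph{difference} I would use $A\setminus B=A\cap(\mathbb{N}^k\setminus B)$ together with De Morgan and the intersection closure just obtained, so it remains to show that the complement of a linear set is semilinear. This is the delicate combinatorial step: one argues by induction on $k$ and on the rank of the period set, decomposing $\mathbb{N}^k$ along the lattice generated by the periods, treating the finitely many residue cosets of that lattice, the finitely many ``boundary'' strips, and the lower-dimensional residual pieces separately, and invoking Dickson's Lemma to keep the number of cases finite. I expect this complementation argument (together with the intersection lemma on which it relies) to be the main obstacle; the rest is bookkeeping. Throughout, each construction is explicitly finitary — systems of Diophantine equations, minimal-solution computations via Dickson's Lemma, affine images of base/period data — so the resulting algebra of rational subsets of $\mathbb{N}^k$ is \emph{effective}, as required.
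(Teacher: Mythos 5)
This theorem is not proved in the paper at all: it is imported from Ginsburg and Spanier as a black box, so there is no internal proof to compare you against --- what can be judged is whether your outline would reconstruct the cited result. Your route is the standard direct one (identify rational subsets of $\mathbb{N}^\mathrm{k}$ with semilinear sets, then prove Boolean closure for those), and the first two thirds are sound: the identities $(b+P^*)+(c+Q^*)=(b+c)+(P\cup Q)^*$ and $(b+P^*)^*=\{0\}\cup(b+(P\cup\{b\})^*)$ do give rational $=$ semilinear, and intersection correctly reduces to the non-negative solution set of a linear Diophantine system, which is semilinear with the minimal particular solutions as bases and the minimal homogeneous solutions as periods. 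One caveat already here: Dickson's Lemma gives only \emph{finiteness} of these minimal-solution sets, not an algorithm; to keep the algebra \emph{effective} you need a computable bound on minimal solutions (or an explicit terminating search argument), which is standard but does not follow from Dickson alone.

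The genuine gap is complementation. Your paragraph on the complement of a linear set is a statement of intent rather than an argument: ``decompose along the lattice generated by the periods, treat residue cosets, boundary strips and lower-dimensional pieces'' does not yet confront the actual difficulty, namely period vectors that are linearly dependent or of deficient rank, where the naive coset/strip decomposition breaks down and one must first rewrite the set as a finite union of linear sets with linearly independent periods (a normalization that itself relies on the intersection machinery) before any induction on dimension or rank can be run. This is precisely the step that makes the direct, Eilenberg--Sch\"utzenberger-style proof delicate, and it is the step your proposal explicitly leaves open. Ginsburg and Spanier themselves sidestep it: they prove that semilinear sets coincide with the Presburger-definable subsets of $\mathbb{N}^\mathrm{k}$ and obtain effective closure under union, intersection and difference from quantifier elimination for Presburger arithmetic, where complementation is just logical negation. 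So either carry out the independent-periods decomposition and the induction in full, or switch to the Presburger route; as it stands, the hardest part of the theorem is the part that is missing.
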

\begin{definition} [$\omega$-extension]
\label{d25}
Let $\mathbb{N}_\omega=\mathbb{N}\cup \{\omega \}$, where $\omega$  is a new symbol (denoting infinity). We extend, in a natural way, arithmetic operations: \\$\omega+\mathrm{n}=\omega$, $\omega-\mathrm{n}=\omega$, and the order: $(\forall \mathrm{n}\in \mathbb{N}) \ \mathrm{n}<\omega$. \\The set of k-dimensional vectors over $\mathbb{N}_\omega$ we shall denote by $\mathbb{N}_\omega^\mathrm{k}$, and its elements we shall call \emph{$\omega$-vectors}. Operations $+,-$ and the order $\leq$ in $\mathbb{N}_\omega^\mathrm{k}$ are componentwise.
\\ 
For $X\subseteq \mathbb{N}_\omega^\mathrm{k}$, we denote by Min($X$) the set of all minimal (wrt $\leq$) members of $X$, and by Max($X$) the set of all maximal (wrt $\leq$) members of $X$. Let $v,v' \in  \mathbb{N}_\omega^\mathrm{k}$ be $\omega$-vectors such that $v \leq v'$, then we say that \emph{$v'$ covers $v$} (\emph{$v$ is covered by $v'$}) .
\end{definition}
Let us recall the well known important fact known as the Dickson's Lemma.

\begin{lemma} [\cite{Dickson}]
\label{l26}
Any subset of incomparable elements of  $\mathbb{N}_\omega^\mathrm{k}$ is finite.
\end{lemma}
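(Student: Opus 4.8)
The plan is to establish the slightly stronger well-quasi-order property: every infinite sequence $(v_n)_{n\in\mathbb{N}}$ of elements of $\mathbb{N}_\omega^\mathrm{k}$ contains two indices $i<j$ with $v_i\leq v_j$. This immediately yields the lemma, because an infinite set of pairwise incomparable elements is countably infinite, can be listed as a sequence of distinct elements, and then would never satisfy $v_i\leq v_j$ for $i\neq j$, a contradiction. The whole argument is just the classical proof of Dickson's Lemma, adapted to account for the extra symbol $\omega$.

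I would argue by induction on the dimension $\mathrm{k}$. For $\mathrm{k}=1$, the set $\mathbb{N}_\omega$ ordered by $0<1<2<\cdots<\omega$ has order type $\omega+1$, hence is linearly ordered and well-founded; so the set of values of $(v_n)$ has a least element, attained at some index $i$, and for any later index $j>i$ (which exists, the sequence being infinite) we get $v_i\leq v_j$. For the step from $\mathrm{k}$ to $\mathrm{k}+1$, I would write each term as $v_n=(w_n,a_n)$ with $w_n\in\mathbb{N}_\omega^\mathrm{k}$ and $a_n\in\mathbb{N}_\omega$. First extract a subsequence along which the last coordinate is non-decreasing; this relies on the auxiliary fact that any infinite sequence over the well-order $\mathbb{N}_\omega$ admits a non-decreasing subsequence — take $B=\{\,n\mid a_n\leq a_m \text{ for all } m\geq n\,\}$; were $B$ finite one could build an infinite strictly decreasing subsequence, contradicting well-foundedness, so $B$ is infinite and enumerating it yields the desired non-decreasing subsequence. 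Then apply the induction hypothesis to the corresponding subsequence of the $w_n$'s to get two of its indices whose $w$-components are comparable; at those same two indices the $a$-components are comparable too by construction, so the full $(\mathrm{k}+1)$-vectors are comparable there.

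I do not expect a genuine obstacle: the presence of $\omega$ only enters in the base case and in the auxiliary subsequence fact, and both go through verbatim because $\mathbb{N}_\omega$, like $\mathbb{N}$, is a well-order. The mild care needed is just to state the auxiliary fact cleanly so that the induction step reduces to a one-line composition. The one hypothesis that is truly indispensable is that $\mathrm{k}$ is a fixed finite number — it is what makes the induction terminate, and without it the unit vectors already form an infinite antichain — so the write-up should make that explicit. Alternatively, one could bypass the induction by grouping the elements of a hypothetical infinite antichain according to which of their coordinates equal $\omega$ (only $2^\mathrm{k}$ possibilities) and projecting onto the remaining coordinates, thereby reducing the claim to the classical Dickson's Lemma over $\mathbb{N}^m$ cited as \cite{Dickson}.
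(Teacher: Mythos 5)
Your proof is correct. Note that the paper gives no proof of this lemma at all: it is quoted as a known fact with a citation to Dickson, so there is nothing in-paper to compare against, and your self-contained argument is welcome. What you write is the classical proof of Dickson's Lemma adapted to the $\omega$-extension: establish the stronger good-pair (well-quasi-order) property by induction on $\mathrm{k}$, with base case the well-order $\mathbb{N}_\omega$ of type $\omega+1$ and inductive step via extraction of a subsequence whose last coordinate is non-decreasing; this adapts verbatim, exactly as you say. One small precision for the write-up: in the inductive step, what the induction hypothesis (stated as the good-pair property) delivers is a pair of indices $i<j$ of the subsequence with $w_i\leq w_j$ in that order, not merely ``comparable'' $w$-components; it is this ordered inequality, combined with $a_i\leq a_j$ from the monotone choice of subsequence, that composes to $v_i\leq v_j$ --- comparability in opposite directions would not suffice, so state it as $\leq$. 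Your alternative route --- partitioning a putative infinite antichain of $\mathbb{N}_\omega^\mathrm{k}$ according to which coordinates equal $\omega$ (at most $2^\mathrm{k}$ classes), picking an infinite class, and projecting onto the remaining coordinates to obtain an infinite antichain in some $\mathbb{N}^\mathrm{m}$ --- is the cleanest way to reduce the $\omega$-extension to the classical statement actually covered by the citation; either version is perfectly adequate here, and the finiteness of $\mathrm{k}$ is indeed the indispensable hypothesis in both.
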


\begin{definition}[Closures, convex sets]
\label{def_clos}
\begin{itemize}
\item Let $x \in \mathbb{N}_\omega^\mathrm{k}$ and $X\subseteq\mathbb{N}_\omega^\mathrm{k}$. We denote:  $\downarrow x = \{z\in \mathbb{N}^\mathrm{k}\ |\ z\leq x\}$, \\$x\uparrow  = \{z\in \mathbb{N}^\mathrm{k}\ | \ x\leq z\}$,  $\downarrow X = \bigcup\{\downarrow \!x\ | \ x\in X\}$, $ X\uparrow = \bigcup\{ x\uparrow\ |\ x \in X\}$, and call the sets $\mathrm{left}$ and $\mathrm{right\ closures}$ of $X$, respectively;
\item A set $X\subseteq \mathbb{N}^\mathrm{k}$ such that $X= \downarrow X$ ($X=X\uparrow$) is said to be $\mathrm{left}$-($\mathrm{right}$-) $\mathrm{closed}$;
\item A set $X\subseteq \mathbb{N}^\mathrm{k}$ such that $X= \downarrow X=X\uparrow$  is said to be $\mathrm{convex}$.
\end{itemize}

\end{definition}
We also recall a fact proved in \cite{BarOch}:

\begin{proposition}
\label{p26}
Any convex subset of $\mathbb{N}^\mathrm{k}$ is rational.
\end{proposition}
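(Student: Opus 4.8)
The plan is to reduce the claim to two observations about one–sided closures, combined with the boolean algebra property of rational sets (Theorem~\ref{twGins}). First I would note that a convex set $X\subseteq\mathbb{N}^k$ satisfies $X=\downarrow X\cap X\uparrow$: the inclusion $X\subseteq \downarrow X\cap X\uparrow$ is trivial, and if $z\in \downarrow X\cap X\uparrow$ then $a\le z\le c$ for some $a,c\in X$, hence $z\in X$ by convexity. Since rational subsets of $\mathbb{N}^k$ are closed under intersection, it therefore suffices to prove that the left closure $\downarrow X$ and the right closure $X\uparrow$ of an \emph{arbitrary} set $X\subseteq\mathbb{N}^k$ are rational.

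For the right closure I would invoke Dickson's Lemma (Lemma~\ref{l26}): the set $\mathrm{Min}(X)$ consists of pairwise incomparable vectors, hence is finite, and since $\mathbb{N}^k$ is well founded under $\le$ every element of $X$ dominates a minimal one, so $X\uparrow=\mathrm{Min}(X)\uparrow=\bigcup_{m\in\mathrm{Min}(X)}\big(\{m\}+\mathbb{N}^k\big)$. This is a finite union of translates of $\mathbb{N}^k$; as $\mathbb{N}^k$ is itself rational (it is $F^*$ for $F$ the finite set of unit vectors together with the zero vector) and rational sets are closed under addition and finite union, $X\uparrow$ is rational.

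For the left closure the same idea fails, since the \emph{maximal} elements of a left-closed set need not form a finite set; instead I would pass to the complement. The set $\mathbb{N}^k\setminus \downarrow X$ is right-closed, hence rational by the previous step, and $\mathbb{N}^k$ is rational; since rational subsets of $\mathbb{N}^k$ form an effective boolean algebra (Theorem~\ref{twGins}), their difference $\downarrow X=\mathbb{N}^k\setminus(\mathbb{N}^k\setminus \downarrow X)$ is rational as well. Putting the two cases together, $\downarrow X\cap X\uparrow$ is rational, and this set equals $X$.

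I expect the only genuinely delicate point to be exactly this asymmetry: Dickson's Lemma yields finite generation "from below" for free, but not "from above", so the left-closed case cannot be handled by mimicking the right-closed one and must instead be routed through complementation and Ginsburg/Spanier. Everything else is bookkeeping with the closure operators and the definition of convexity.
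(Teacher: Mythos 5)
Your argument is correct, and a comparison with ``the paper's proof'' is not really possible here: the paper does not prove Proposition~\ref{p26} at all, it only recalls it from \cite{BarOch}. Judged on its own, your proof is sound and uses exactly the tools the paper has on hand. The reduction $X=\downarrow X\cap X\uparrow$ is valid; the right closure is rational because Dickson's Lemma (Lemma~\ref{l26}) together with well-foundedness of $\leq$ gives $X\uparrow=\bigcup_{m\in\mathrm{Min}(X)}\bigl(\{m\}+\mathbb{N}^\mathrm{k}\bigr)$, a finite union of sums of rational sets; and your complementation trick for the left closure works because the complement of a left-closed set is right-closed, hence rational by the same argument, and Theorem~\ref{twGins} provides closure under difference and intersection. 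The degenerate case $X=\emptyset$ is covered, since an empty union is a finite (hence rational) set. One remark on the definition: Definition~\ref{def_clos} as typeset says a convex set satisfies $X=\downarrow X=X\uparrow$, which read literally would only admit $\emptyset$ and $\mathbb{N}^\mathrm{k}$; the intended notion (needed later when the paper calls $\downarrow v$ and $\mathrm{E}_{a,b}$ convex) is order-convexity, i.e. $X=\downarrow X\cap X\uparrow$, and your proof is valid under either reading, since under the literal one the identity you start from is immediate. As a side note, the downward-closed part could also be handled without complementation, via the standard fact that every left-closed subset of $\mathbb{N}^\mathrm{k}$ is a finite union of sets $\downarrow u$ with $u\in\mathbb{N}_\omega^\mathrm{k}$, each plainly rational; but your route through Theorem~\ref{twGins} is perfectly adequate, and your observation about the asymmetry between the two closures is exactly the right point to be careful about.
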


\subsection{Reachability graph/tree and coverability graph}

Let us recall the notions of  \emph{a reachability graph/tree} and \emph{a coverability graph}. Their definitions can be also found in any monograph or survey about Petri nets (see \cite{DeselReisig,Starke} or arbitrary else). Reachability graphs/trees are used for studying complete behaviors of nets, but they are usually infinite, whichmakes an accurate analysis of them difficult. That is why we study coverability graphs, which represent the behaviours of nets only partially, but are always finite.
\\ \\
\emph{The reachability graph} of a p/t-net $\mathrm{S}=(P,T,F,M_0)$ is a couple $\mathrm{RG}=(\mathrm{G},M_0)$ where $[M_0\rangle\times P \times [M_0\rangle\supseteq\mathrm{G}=\{(M,a,M')\ |\ M\in[M_0\rangle\land MaM'\}$.
\\ \\
The reachability graph $\mathrm{RG}$ represents graphically the behaviour of the net $\mathrm{S}$. Vertices of the graph are reachable markings from the set $[M_0\rangle$, while edges are ordered pairs of reachable states, labeled by actions. More precisely: the edge $(M,a,M')\in\mathrm{G}$ iff $M$ is a state reachable from the initial marking $M_0$, an action $a\in T$ (the label of the edge $(M,a,M')$) is enabled in a state $M$ and $M'=Ma$. The existence of an edge $(M,a,M')$ in the reachability graph of the net $\mathrm{S}$ indicates that the marking $M$ is reachable in $\mathrm{S}$, the action $a$ is enabled in $M$ and after the execution ot the action $a$ in the marking $M$, the net $\mathrm{S}$ reaches the state $M'$.
\\ \\
Sometimes it is more convenient to use a special graph structure for listing all reachable markings of a given p/t-net, namely a tree structure. Such a tree is called \emph{a~reachability tree}. 
\\ \\For a given net $\mathrm{S}=(P,T,F,M_0)$ we construct its reachability tree $\mathrm{RT}$ proceeding as follows:
\begin{itemize}
\item We start with the initial marking $M_0$ which is the root vertex of the reachability tree.
\item For each action $a$ enabled in the initial marking of the net, we create a new vertex $M'$, such that $M'=M_0a$, and an edge $(M_0,a,M')$ leading from $M_0$ to $M$ labelled by $a$. 
\item We repeat the procedure for all the newly created vertices (markings).
\end{itemize}\mbox{ }\\
\textbf{Remark:} The construction of a reachability tree is a process potentially endless, as the structure is infinite in many cases.\\
\begin{definition}
Let $\mathrm{RT}$ be a reachability tree of a net $\mathrm{S}=(P,T,F,M_0)$. The \emph{k-component} of the reachability tree $\mathrm{RT}$ is the initial part of the tree of the depth~k (all vertices at depth lower than or equal to k).
\end{definition}\mbox{ }\\
In the case of a coverability tree it is convenient to present a constructional definition. That is why we introduce:
\\ \\
\textbf{Algorithm of the construction of a coverability graph}
\\ \\
We create a coverability graph for a p/t-net $\mathrm{S}=(P,T,F,M_0)$
\begin{itemize}
\item Step 0. \textit{An initial vertex}\\
We set $M_0$ \textbf{blue} for a start.\\
GOTO Step 1.\\ 
\item Step 1. \textit{Generating of new working vertices}\\
If there is no \textbf{blue} vertices then STOP.\\
We take an arbitrary \textbf{blue} vertex $M$ and draw from it all the arcs of the form $(M,t,M')$ for all $t\in T$ enabled in $M$, where $M'=Mt$. If the vertex $M'$ already exists (in any colour), then the newly created arc leads to the existing vertex (we do not create a new one). New vertices are set \textbf{yellow}. After drawing all such arcs we set the vertex $M$  \textbf{grey} (\textit{a final node}).\\
GOTO Step 2.\\ 
\item Step 2. \textit{Coverability checking}\\
If there is no \textbf{yellow} vertices GOTO Step 1.\\
We take an arbitrary \textbf{yellow} vertex $M$ and check for any of the paths from $M_0$ to $M$ whether a vertex $M'$ such that $M'\leq M$ lies on the path. If such a vertex exists then every coordinate of the marking $M$ greater than the corresponding coordinate of the marking $M'$ changes to $\omega$. Finally we set the vertex $M$ \textbf{blue}.\\
GOTO Step 2.\\ 
\end{itemize}
\begin{example}
\begin{figure}[h]
\centering
\includegraphics[width=1\textwidth]{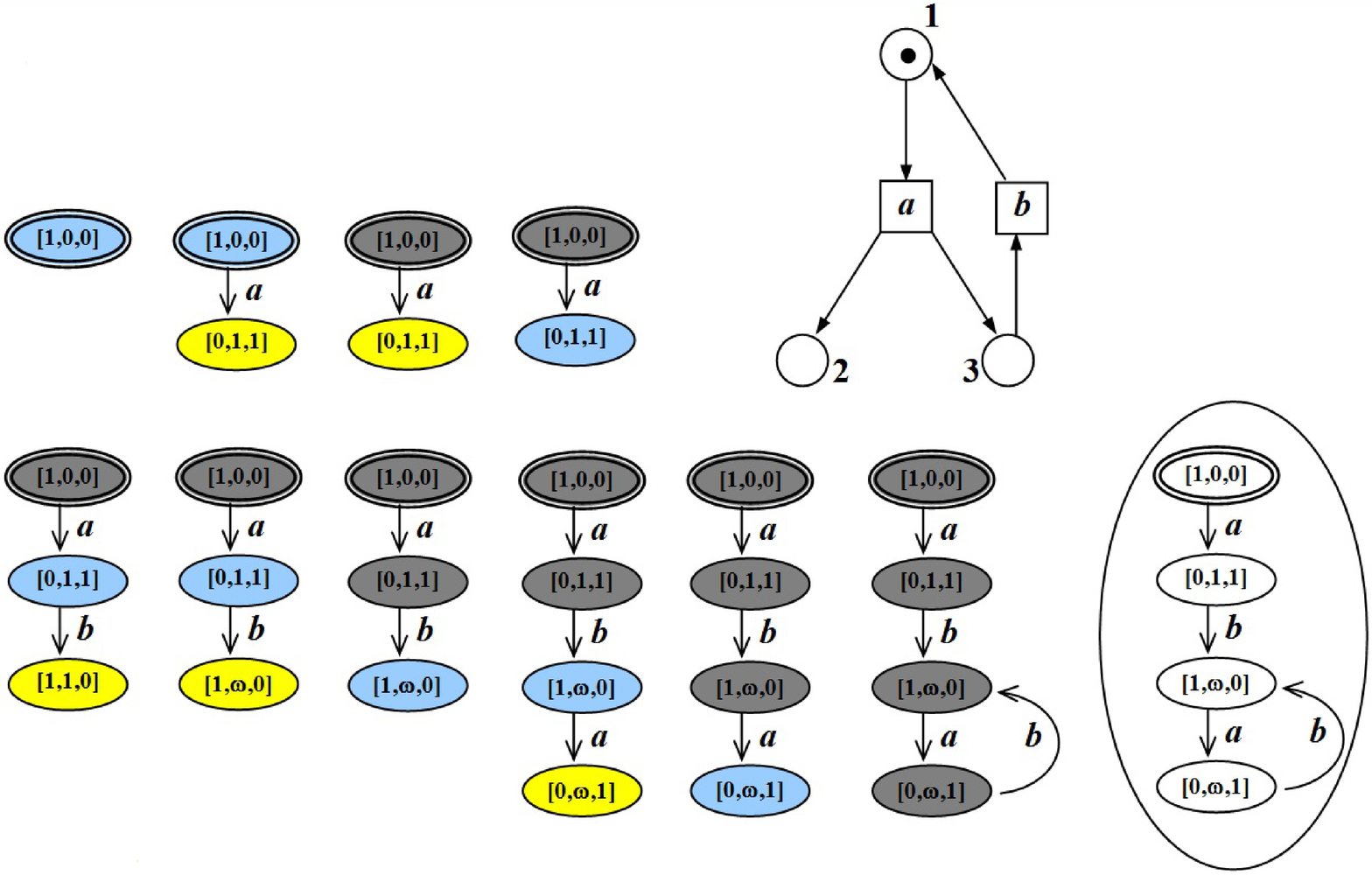}
\caption{A p/t-net and its coverability graph}
\label{FigC}
\end{figure}
\label{e_cov}
Let us look at the Example of Figure \ref{FigC}. A p/t-net and stages of the construction of its coverabilty graph are presented there.
\end{example}
\mbox{ }\\
\textbf{Remark:} A coverability graph is always finite. The proof is based on two facts: the Dickson's Lemma (Lemma~\ref{l26}) and the monotonicity property (Fact \ref{f27}).
\\ 
\subsection{Reachability and Coverability Problems}

Let us now recall very famous decision problems concerning Petri nets, namely the Reachability Problem and the Coverability Problem.
\\ \\
\textbf{Reachability Problem}
\\
\indent\textbf{\emph{Instance:}} P/t-net $\mathrm{S}=(\mathrm{N}, M_0)$, and a marking $M$.\\
\indent\textbf{\emph{Question:}} Is $M$ reachable in S?
\\ \\
\textbf{Coverability Problem}
\\
\indent\textbf{\emph{Instance:}} P/t-net $\mathrm{S}=(\mathrm{N}, M_0)$, and a marking $M$.\\
\indent\textbf{\emph{Question:}} Is $M$ coverable in S?
\\ \\
\textbf{Remark:} It is well known that the above problems are decidable (coverability: Karp/Miller \cite{KarpMiller}, Hack \cite{Hack}; reachability: Mayr \cite{Mayr}, Kosaraju \cite{Kosaraju}).

\subsection{Three Kinds of Persistence}

The notion of persistence is one of the classical notions in concurrency theory. The notion is recalled in \cite{BarOch} (named in the sequel e/e-persistence). Some of its generalizations: l/l-persistence and e/l-persistence are also introduced there.
\\ \\
\textbf{Note on terminology}\\
The notion of persistence in its classical meaning is a property of nets. The definition of \cite{LandRob} involves the entire concurrent system. 
If we choose to define the concept of persistence starting from actions by markings, ending with whole nets, the classic definition can be interpreted in two ways. Namely, one can consider concepts of \emph{persistence} and \emph{nonviolence}. An extensive discussion on the links between persistence and nonviolence can be found in \cite{Koutny}. In the context of \cite{BarMikOch} and  \cite{Koutny} it seems that it would be more appropriate to use the notion of nonviolence instead of using the concept of persistence. However, because our paper is an extension of \cite{BarOch}, we decided to stick to the concept of persistence.

\newpage
Let us sketch the notions of e/e-persistence, l/l-persistence and e/l-persistence informally. The classical e/e-persistence means "no action can disable another one", the l/l-persistence means "no action can kill another one" and the e/l-persistence means "no action can kill another enabled one". Let us go on to formal definitions.

\begin{definition}[Three kinds of persistence]
\label{d31}
Let $\mathrm{S}=(P,T,F,M_0)$ be a place/transition net. \\
If  $(\forall M \in [M_0\rangle)\ (\forall a,b \in T)$
\begin{itemize}
\item $Ma \land Mb \land a\neq b \Rightarrow Mab$, then $\mathrm{S}$ is said to be \emph{e/e-persistent};
\item $Ma \land (\exists u) Mub \land a\neq b \Rightarrow (\exists v) Mavb$, then $\mathrm{S}$ is said to be \emph{l/l-persistent};
\item $Ma \land Mb \land a\neq b \Rightarrow (\exists v) Mavb$, then $\mathrm{S}$ is said to be \emph{e/l-persistent}.
\end{itemize}\mbox{}\\
The classes of e/e-persistent (l/l-persistent, e/l-persistent) p/t-nets will be denoted by $\mathrm{P}_{\mathrm{e/e}}$, $\mathrm{P}_{\mathrm{l/l}}$ and $\mathrm{P}_{\mathrm{e/l}}$, respectively.
\end{definition}
In \cite{BarOch} one can find a proof of the following theorem:
\begin{theorem}\mbox{}\\
\label{t_hier}
The three classes of persistent place/transition nets
form an increasing hierarchy: $P_{e/e}\subseteq P_{l/l}\subseteq P_{e/l}$.
\end{theorem}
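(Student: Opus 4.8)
The plan is to prove the two inclusions $P_{e/e}\subseteq P_{l/l}$ and $P_{l/l}\subseteq P_{e/l}$ separately. The second one is immediate from the definitions: if $\mathrm{S}$ is l/l-persistent and a reachable marking $M$ and transitions $a\neq b$ satisfy $Ma\land Mb$, then in particular $Mb$, i.e. the premise of the l/l-condition holds with $u=\varepsilon$, so it supplies a word $v$ with $Mavb$; this is exactly the conclusion of the e/l-condition, hence $\mathrm{S}\in P_{e/l}$. All the work is therefore in the inclusion $P_{e/e}\subseteq P_{l/l}$.

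For the latter, let $\mathrm{S}$ be e/e-persistent, fix a reachable marking $M$, transitions $a\neq b$ with $Ma$, and a word $u$ together with a run $MuM'$, $M'b$; I must exhibit a word $v$ with $Mavb$. I would argue by induction on $|u|$. If $|u|=0$, then $M'=M$, so $Mb$ holds; from $Ma$, $Mb$ and $a\neq b$, e/e-persistence gives $Mab$, i.e. $Mavb$ with $v=\varepsilon$. For the induction step write $u=cu'$ with $McM_1$, $M_1u'M'$, $M'b$. If $c=a$, then already $Mau'M'$ and $M'b$, so $v=u'$ works. If $c\neq a$, then $Mc$, $Ma$ and $c\neq a$, so e/e-persistence gives both $Mac$ and $Mca$, and the Diamond Property (Fact~\ref{f24}) shows these two executions end in a common marking $M_2$; writing $N$ for the marking with $MaN$ we thus have $NcM_2$, and also $M_1aM_2$. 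Since $M_1\in[M_0\rangle$, $M_1a$, $M_1u'M'$, $M'b$ and $a\neq b$ with $|u'|<|u|$, the induction hypothesis applied at $M_1$ yields a word $v'$ with $M_1av'b$; because $M_1aM_2$ this means $M_2v'M_3$ and $M_3bM_4$ for suitable markings. Concatenating the steps $MaN$, $NcM_2$, $M_2v'M_3$, $M_3bM_4$ gives $Ma(cv')b$, so $v=cv'$ works and the induction is complete.

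The only delicate point is the induction step, and within it the use of the Diamond Property: e/e-persistence by itself merely says that $a$ and the leading transition $c$ of $u$ can each be fired after the other, but to transport the shorter instance obtained at $M_1$ back to $M$ one really needs that the interleavings $ac$ and $ca$ reach the same marking $M_2$. This is precisely the ingredient that depends on the monotone (p/t-net) setting, cf. Facts~\ref{f27} and~\ref{f28}; everything else is routine bookkeeping of the markings visited along the fired sequences.
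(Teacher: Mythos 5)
Your proof is correct. Note that the paper itself does not prove Theorem~\ref{t_hier} at all -- it only cites \cite{BarOch} for it -- so there is no in-paper argument to compare against; what you give is a self-contained proof, and it is the standard one (essentially Keller's argument): the inclusion $P_{l/l}\subseteq P_{e/l}$ is the trivial instantiation $u=\varepsilon$, and $P_{e/e}\subseteq P_{l/l}$ is obtained by induction on the length of the witnessing word $u$, using e/e-persistence to fire $a$ and the leading transition $c$ in either order and the Diamond Property (Fact~\ref{f24}) to identify the two resulting markings so that the shorter instance at $M_1$ can be transported back to $M$. All the steps check out: $M_1$ is reachable, $M_1a$ holds because $Mca$ is enabled, and the case $c=b$ needs no separate treatment. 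One small inaccuracy in your closing commentary: the ingredient you actually use is the Diamond Property, which is Fact~\ref{f24}, not the monotonicity property of Facts~\ref{f27} and~\ref{f28}; both are particular to p/t-nets (and fail for inhibitor nets), but they are distinct facts, and your proof body correctly invokes Fact~\ref{f24}, so this does not affect correctness.
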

\begin{figure}[h]
\centering
\includegraphics[width=0.1\textwidth]{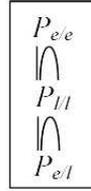}
\caption{A hierarchy of persistent nets}
\label{FigA}
\end{figure}
\begin{example}
\label{ex_fig_e_d}
To see the strictness of the above inclusion, let us look at the Examples of Figure \ref{Fig_E} and \ref{Fig_D} (derived from \cite{BarOch}).
\end{example}
\begin{figure}[h]
\centering
\includegraphics[width=0.2\textwidth]{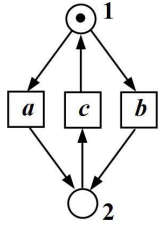}
\caption{The net is live, so $P_{\mathrm{l/l}}$, but not $P_{\mathrm{e/e}}$ }
\label{Fig_E}
\end{figure}
\begin{figure}[h]
\centering
\includegraphics[width=0.7\textwidth]{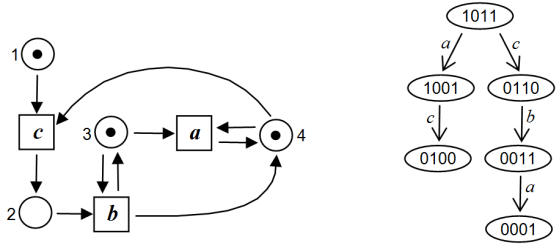}
\caption{The transition $a$ kills $b$ undirectly (because in the current marking, the transition $b$ is not enabled), so the net is $P_{\mathrm{e/l}}$, but not $P_\mathrm{{l/l}}$ }
\label{Fig_D}
\end{figure}
It is also shown in \cite{BarOch} that the following decision problems are decidable:
\\
\\
\textbf{\emph{Instance:}} A p/t-net $(\mathrm{N}, M_0)$ \\
\textbf{\emph{Questions:}}\\
\indent\textbf{EE Net Persistence Problem:} Is the net S e/e-persistent?\\
\indent\textbf{LL Net Persistence Problem:} Is the net S l/l-persistent?\\
\indent\textbf{EL Net Persistence Problem:} Is the net S e/l-persistent?\\
\\ 
\newpage
The proofs of decidability of the above problems need to put into work a very efficient result of Valk/Jantzen \cite{ValkJantzen} and benefit from the decidability of reachability problem (more specifically - decidability of the Set Reachability Problem for rational convex sets).
\\
The alternative proof of Theorem \ref{t418} uses exactly the same proving technique as the proofs of decidability of the persistence problems mentioned above.

\section{Properties of e/l-persistence}

\subsection{Hierarchy of e/l-persistence}

In the previous section we defined three kinds of persistence. Now, we extend the hierarchy mentioned above with an infinite hierarchy of e/l-persistent steps.

\begin{definition}[E/l-persistent steps - an infinite hierarchy]
\label{d41}\\
Let $\mathrm{S}=(P,T,F,M_0)$ be a p/t-net, let $M$ be a marking.
We call a step $MaM'$:

\begin{itemize}
\item \emph{e/l-0-persistent} iff it is \emph{e/e-persistent} (the execution of an action a does not disable any other action);
\item \emph{e/l-1-persistent} iff $(\forall b\in T, b\neq a) \ Mb \Rightarrow [Mab \lor (\exists c \in T) Macb]$ (the execution of an action a pushes the execution of any other enabled action away for at most 1 step);
\item \emph{e/l-2-persistent} iff $(\forall b\in T, b\neq a) \ Mb \Rightarrow (\exists w \in T^*) [|w|\leq 2 \land Mawb]$ (the execution of an action a pushes the execution of any other enabled action away for at most 2 steps);
\\
\indent\ldots

\item \emph{e/l-k-persistent} for some $\mathrm{k} \in \mathbb{N}$ iff $(\forall b\in T, b\neq a) \ Mb \Rightarrow (\exists w \in T^*) [{|w|\leq \mathrm{k}} \land Mawb]$ (the execution of an action a pushes the execution of any other enabled action away for at most $\mathrm{k}$ steps);
\\
\indent\ldots
\item \emph{e/l-$\infty$-persistent} iff $(\forall b\in T, b\neq a) \ Mb \Rightarrow (\exists w \in T^*) \ Mawb$ (the execution of an action a pushes the execution of any other enabled action away).
\end{itemize}
\end{definition}

\textbf{Remark:} Note that e/l-$\infty$-persistent steps are exactly e/l-persistent steps.
\\ \\
Directly from Definition \ref{d41} we get the
\begin{fact}
\label{f42}
Let $\mathrm{S}=(P,T,F,M_0)$ be a p/t-net, let $M$ be a marking. If the step $MaM'$ is e/l-$\mathrm{k}$-persistent for some $\mathrm{k} \in \mathbb{N}$, then it is also e/l-$\mathrm{i}$-persistent for every $\mathrm{i}\geq \mathrm{k}$.
\end{fact}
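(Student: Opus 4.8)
The plan is to derive this directly from Definition~\ref{d41}, by unfolding the defining formula and observing that the word witnessing postponement does not grow when the bound is relaxed. Fix a p/t-net $\mathrm{S}=(P,T,F,M_0)$, a marking $M$, an action $a$ with $MaM'$, and assume the step $MaM'$ is e/l-$\mathrm{k}$-persistent. Let $\mathrm{i}\in\mathbb{N}$ with $\mathrm{i}\geq\mathrm{k}$; we must show that for every $b\in T$ with $b\neq a$ such that $Mb$ holds there is $w\in T^*$ with $|w|\leq\mathrm{i}$ and $Mawb$.

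First I would treat the generic case $\mathrm{k}\geq 1$, where the defining clause of e/l-$\mathrm{k}$-persistence is already stated through the existential quantifier over $w$. By that clause, for each such $b$ there is $w\in T^*$ with $|w|\leq\mathrm{k}$ and $Mawb$. Since $\mathrm{i}\geq\mathrm{k}$, the very same $w$ satisfies $|w|\leq\mathrm{k}\leq\mathrm{i}$ and $Mawb$, which is exactly the clause defining e/l-$\mathrm{i}$-persistence of the step $MaM'$. No diamond property, no monotonicity, and no reachability argument is needed here: the statement is purely about the shape of the quantified formula.

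The only point that deserves a line of care is the boundary case $\mathrm{k}=0$, where Definition~\ref{d41} phrases e/l-$0$-persistence as e/e-persistence of the step, i.e. $(\forall b\in T,\ b\neq a)\ Mb\Rightarrow Mab$, rather than through the existential clause. Here I would observe that $Mab$ is the same as $Ma\varepsilon b$ with $w=\varepsilon$, and $|\varepsilon|=0\leq\mathrm{i}$, so the existential clause of e/l-$\mathrm{i}$-persistence is met with this empty witness; incidentally this also reconciles the two formulations, showing the $\mathrm{k}=0$ entry of the hierarchy is the expected specialisation of the general one. Finally, if one also wishes to include $\mathrm{i}=\infty$ (the e/l-$\infty$ entry of Definition~\ref{d41}), the same $w$ works a fortiori since that clause carries no length bound, so the monotonicity of the hierarchy extends all the way up to e/l-persistent steps. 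I do not expect any genuine obstacle in this proof; it is a straightforward unfolding of definitions.
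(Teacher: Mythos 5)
Your proposal is correct and matches the paper's treatment: the paper states Fact~\ref{f42} as following directly from Definition~\ref{d41}, which is exactly your argument that the same witness $w$ (or $\varepsilon$ in the $\mathrm{k}=0$ case) satisfies the relaxed length bound. Your extra care with the $\mathrm{k}=0$ phrasing and the $\infty$ level is a harmless, sensible elaboration of the same unfolding.
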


\begin{definition}
\label{d43}
Let $\mathrm{S}=(P,T,F,M_0)$ be a p/t-net, $M$ be a marking and $\mathrm{k}\in \mathbb{N}$.
Marking $M$ is \emph{e/l-k-persistent} iff for every action $a \in T$ that is enabled in $M$ the step $Ma$ is \emph{e/l-k-persistent}.
P/t-net $\mathrm{S}=(\mathrm{N},M_0)$ is \emph{e/l-k-persistent} iff every marking reachable in S is \emph{e/l-k-persistent}.
We denote the class of e/l-$\mathrm{k}$-persistent p/t-nets by $\mathrm{P}_{\mathrm{e/l-k}}$.
\end{definition}

\begin{example}
\label{e_f}
\begin{figure}[h]
\centering
\includegraphics[width=0.2\textwidth]{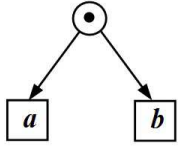}
\caption{A p/t-net (for Ex.\ref{e_f}) that is not e/l-k-persistent for any $\mathrm{k}\in \mathbb{N}$}
\label{Fig_F}
\end{figure}

Let us look at the example of Fig. \ref{Fig_F}. Both actions $a$ and $b$ are enabled in the initial marking. After the execution of the action $a$, the action $b$ is never enabled again, and after the execution of the action $b$, the action $a$ is never enabled again. So the net can not be e/l-k-persistent for any natural number k.
\end{example}

\begin{example}
Let us look at the example of Fig. \ref{Fig_E}. The net is not e/l-0-persistent but it is e/l-1-persistent.
\end{example}
\begin{example}
\label{e411}

\begin{figure}[h]
\centering
\includegraphics[width=0.4\textwidth]{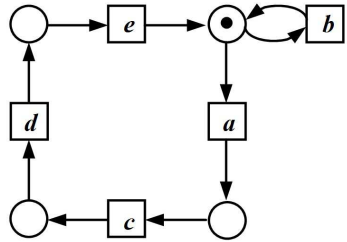}
\caption{A p/t-net (for Ex.\ref{e411}) that is e/l-3 persistent but not e/l-2 persistent}
\label{Fig2}
\end{figure}

Let us look at the example of Fig. \ref{Fig2}. The only possible situation for temporary disabling an action by another one is the execution of $a$ that disables $b$. And then $b$ could be enabled again after the execution of the sequence $cde$, so after 3 steps. Hence, the net is e/l-3-persistent, and obviously not e/l-2-persistent.
\end{example}
The direct conclusion from Fact \ref{f42} and Definition \ref{d43} is as follows:
\begin{fact}
\label{f44}
Let $\mathrm{S}=(P,T,F,M_0)$ be a p/t-net, $M$ be a marking, and $\mathrm{k}\in\mathbb{N}$.
If the marking $M$ is e/l-$\mathrm{k}$-persistent, then it is also e/l-$\mathrm{i}$-persistent for every $\mathrm{i}\geq \mathrm{k}$.
If the net S is e/l-$\mathrm{k}$-persistent, then it is also e/l-$\mathrm{i}$-persistent for every $\mathrm{i}\geq \mathrm{k}$.
\end{fact}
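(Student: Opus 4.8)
The plan is to derive both implications directly by unfolding Definition~\ref{d43} and applying Fact~\ref{f42}; no new combinatorial work is needed, so the argument is essentially bookkeeping. I would first fix $\mathrm{k}\in\mathbb{N}$ and an arbitrary $\mathrm{i}\geq\mathrm{k}$, and treat the two statements in sequence, the marking case feeding into the net case.

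For the marking case, assume $M$ is e/l-$\mathrm{k}$-persistent. By Definition~\ref{d43} this means that for every $a\in T$ enabled in $M$ the step $Ma$ is e/l-$\mathrm{k}$-persistent. Fact~\ref{f42} then gives that each such step $Ma$ is e/l-$\mathrm{i}$-persistent, since $\mathrm{i}\geq\mathrm{k}$. As this holds for every action enabled in $M$, the second clause of Definition~\ref{d43} yields that $M$ is e/l-$\mathrm{i}$-persistent.

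For the net case, assume $\mathrm{S}$ is e/l-$\mathrm{k}$-persistent. By Definition~\ref{d43}, every marking $M\in[M_0\rangle$ is e/l-$\mathrm{k}$-persistent, hence by the marking case just proved, every such $M$ is e/l-$\mathrm{i}$-persistent. Applying Definition~\ref{d43} once more (the clause for nets), $\mathrm{S}$ is e/l-$\mathrm{i}$-persistent, which completes the argument.

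I do not anticipate any real obstacle here: the only subtlety worth a sentence is making explicit that the quantifier "for every action $a$ enabled in $M$" and "for every reachable marking" are preserved unchanged when passing from index $\mathrm{k}$ to $\mathrm{i}$, so that the monotonicity of Fact~\ref{f42} lifts level-by-level to markings and then to nets.
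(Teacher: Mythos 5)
Your argument is correct and matches the paper's treatment exactly: the paper presents this fact as a direct conclusion from Fact~\ref{f42} and Definition~\ref{d43}, which is precisely the unfolding you carry out for the marking case and then lift to the net case. Nothing further is needed.
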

\mbox{ }\\
\textbf{Remark:} Based on this Fact \ref{f44} we can extend the existing hierarchy of persistent nets as shown in Figure \ref{FigB}.
\begin{figure}[h]
\centering
\includegraphics[width=0.8\textwidth]{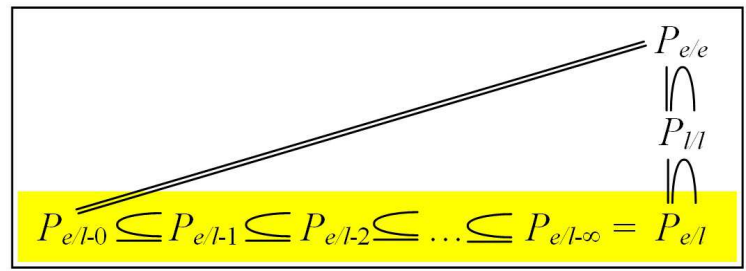}
\caption{A hierarchy of persistent nets - an extension}
\label{FigB}
\end{figure}

\subsection{Related decision problems}

\subsubsection{\textbullet\ EL-k Step Persistence Problem \\and EL-k Marking Persistence Problem}\mbox{ }\\ \\
Let $\mathrm{k}\in\mathbb{N}$ be a fixed natural number. Now we can formulate basic problems regarding the concept of e/l-k-persistence. \\ \\
\newpage

\mbox{ }\\
The first problem is as follows:
\\ \\
\textbf{EL-k Step Persistence Problem}
\\
\indent\textbf{\emph{Instance:}} P/t-net S, marking $M$, action $a \in T$ enabled in $M$.\\
\indent\textbf{\emph{Question:}}Is the step $Ma$ e/l-k-persistent?

\begin{theorem}
\label{t45}
The EL-k Step Persistence Problem is decidable (for any $\mathrm{k} \in \mathbb{N}$).
\end{theorem}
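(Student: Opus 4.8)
The plan is to reduce the problem to a finite, explicitly bounded search over short execution sequences. By Definition~\ref{d41}, the step $Ma$ is e/l-k-persistent exactly when, for every transition $b \in T$ with $b \neq a$ such that $Mb$ holds, there is a word $w \in T^*$ with $|w| \leq k$ and $Mawb$. Since $a$ is enabled in $M$, the successor marking $M' = Ma$ is obtained by one componentwise subtraction and addition, hence is available effectively.

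First I would dispatch the outer quantifier. The set $B = \{b \in T \mid b \neq a \text{ and } Mb\}$ is finite and computable, because enabledness $Mb$ is merely the componentwise test ${}^\bullet b \leq M$. For each $b \in B$ we must decide the inner statement $(\exists w)[\,|w| \leq k \ \wedge\ M'wb\,]$. The crucial observation is that the witness $w$ ranges over a finite set: there are at most $\sum_{i=0}^{k}|T|^i$ words over $T$ of length at most $k$. Equivalently, one may build the $k$-component of the reachability tree rooted at $M'$, which is finite, and ask whether $b$ is enabled in some of its vertices.

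Concretely, for each candidate $w = t_1 \cdots t_m$ with $m \leq k$ I would simulate the net from $M'$: at stage $j$ check whether $t_j$ is enabled in the current marking and, if so, replace that marking by its $t_j$-successor; if some $t_j$ is not enabled, discard $w$. If $w$ is executable from $M'$ and reaches $M''$, test whether ${}^\bullet b \leq M''$. The inner statement holds for $b$ iff at least one of these finitely many tests succeeds, and the step $Ma$ is e/l-k-persistent iff the inner statement holds for every $b \in B$. All operations involved --- comparisons and additions of integer vectors, enumeration of bounded-length words --- are effective, so the procedure decides the EL-k Step Persistence Problem.

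I do not foresee a real obstacle here; the only point deserving care is the length bound on witnesses. Because Definition~\ref{d41} caps $|w|$ at the fixed number $k$, it is never necessary to explore executions longer than $k$, so the search stays finite even though the reachability set of $\mathrm{S}$ may be infinite --- in particular, no appeal to the decidability of reachability or to coverability graphs is needed for this particular theorem. (The running time is of course exponential in $k$ and $|T|$, but only decidability is claimed.)
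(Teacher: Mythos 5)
Your proposal is correct and is essentially the paper's own argument: the paper builds the $(\mathrm{k}+1)$-component of the reachability tree rooted at $M'=Ma$ and checks, for each enabled $b\neq a$, for an arc labelled $b$, which is exactly your bounded enumeration of witnesses $w$ with $|w|\leq \mathrm{k}$ followed by an enabledness test for $b$. Your added remarks (explicit word enumeration, no need for reachability/coverability machinery) are just an elaboration of the same finite-search idea.
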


\begin{proof}
An algorithm to check if a step $Ma$ is e/l-k-persistent (for some $\mathrm{k} \in \mathbb{N}$) for a given net $\mathrm{S}=(\mathrm{N},M_0)$:\\
Let us build the part of the depth of k+1 (we call it the (k+1)-component) of the reachability tree of $(\mathrm{N},M')$, where $M'$ is a marking obtained from $M$ by execution of $a$. The step $Ma$ is e/l-k-persistent if for every action $b \in T$, such that $a\neq b$ and $b$ is enabled in $M$, there is a path in the (k+1)-component of the reachability tree of $(\mathrm{N},M')$ containing an arc labeled by $b$. 	
\end{proof}\mbox{ }\\
Let us introduce another problem:
\\ \\
\textbf{EL-k Marking Persistence Problem}
\\
\indent\textbf{\emph{Instance:}}P/t-net $\mathrm{S}=(\mathrm{N}, M_0)$, marking $M$.\\
\indent\textbf{\emph{Question:}}Is the marking $M$ e/l-k-persistent?

\begin{theorem}
\label{t46}
The EL-k Marking Persistence Problem is decidable \\
\indent\indent\indent (for any $\mathrm{k} \in \mathbb{N})$.
\end{theorem}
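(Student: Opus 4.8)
The plan is to reduce the EL-k Marking Persistence Problem to finitely many instances of the EL-k Step Persistence Problem, which we already know to be decidable by Theorem~\ref{t45}. Recall that, by Definition~\ref{d43}, a marking $M$ is e/l-k-persistent precisely when, for \emph{every} action $a \in T$ that is enabled in $M$, the step $Ma$ is e/l-k-persistent. Since $T$ is a finite set (Definition~\ref{d21}), and since checking whether $^\bullet a \leq M$ is a trivial finite computation, the set $\{a \in T \mid Ma\}$ of actions enabled in $M$ is finite and effectively computable.

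First I would, given the input p/t-net $\mathrm{S}=(\mathrm{N},M_0)$ and the marking $M$, compute the finite set $A_M = \{a \in T \mid {}^\bullet a \leq M\}$ of enabled actions. Then, for each $a \in A_M$, I would invoke the decision procedure of Theorem~\ref{t45} on the instance consisting of the net $\mathrm{S}$, the marking $M$, and the action $a$; concretely this means building the $(k+1)$-component of the reachability tree of $(\mathrm{N}, Ma)$ and checking, for every $b \in T$ with $b \neq a$ and $Mb$, that some path in that component carries an arc labelled $b$. The procedure answers ``yes'' (i.e.\ $M$ is e/l-k-persistent) if and only if all of these $|A_M|$ many calls return ``yes''; otherwise it answers ``no''. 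Correctness is immediate from Definition~\ref{d43}, and termination is immediate because $A_M$ is finite and each individual call terminates by Theorem~\ref{t45}.

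Strictly speaking there is almost no obstacle here: the only thing to be slightly careful about is that the instance of the step problem requires $a$ to be enabled in $M$, which is exactly how $A_M$ was defined, so no call is made with an ill-formed instance. One could equivalently bypass the explicit reduction and observe directly that one builds the $(k+1)$-component of the reachability tree of $(\mathrm{N}, Ma)$ for each enabled $a$ simultaneously; since a p/t-net has finitely many transitions, this is a finite bounded-depth construction in each case. Thus the EL-k Marking Persistence Problem is decidable for any $\mathrm{k}\in\mathbb{N}$, as claimed.
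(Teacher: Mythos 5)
Your proposal is correct and follows essentially the same route as the paper: reduce the marking question to finitely many instances of the EL-k Step Persistence Problem, one for each of the finitely many actions enabled in $M$, and answer ``yes'' iff every call to the procedure of Theorem~\ref{t45} answers ``yes''. The extra details you supply (computing $A_M$ via $^\bullet a \leq M$, termination of each bounded-depth tree construction) are fine but not a different argument.
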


\begin{proof}
For every action $a\in T$ that is enabled in a marking $M$, we check if a step $Ma$ is e/l-k-persistent (for some $\mathrm{k} \in \mathbb{N}$) for a given net $S=(\mathrm{N}, M_0)$, using the algorithm of Theorem \ref{t45}. 	
\end{proof}

\subsubsection{\textbullet \ EL-k Net Persistence Problem}\mbox{ }\\ 
\\
Let us consider the following problem:
\\ \\
\textbf{EL-k Net Persistence Problem}
\\
\indent\textbf{\emph{Instance:}}P/t-net $\mathrm{S}=(\mathrm{N}, M_0),\mathrm{k}\in\mathbb{N}$.\\
\indent\textbf{\emph{Question:}}Is the net S e/l-k-persistent?
\\
\\To solve this problem we must prove a set of auxiliary facts.
\\ \\
From this moment, let $\mathrm{S}=(\mathrm{N}, M_0)$ be an arbitrary p/t-net.
\\ \\
Let us define the following set of markings:\\
$\mathrm{E}_{a,b}=\{M \in \mathbb{N}^{P} \ | \  Ma\land Mb\}$- the set of markings enabling actions $a$ and $b$ simultaneously. \\ 
\\
Let us define $\mathrm{minE}_{a,b} \in \mathbb{N}^{P}$, the minimum marking enabling actions $a$ and $b$ simultaneously: if $(^\bullet a[\mathrm{i}]=1 \lor \ ^\bullet b[\mathrm{i}]=1)$ then $\mathrm{minE}_{a,b}[\mathrm{i}]:=1$ else $\mathrm{minE}_{a,b}[\mathrm{i}]:=0$ (for $\mathrm{i}=\{1,\ldots,|P|\})$.\\
Note that $\mathrm{E}_{a,b}=\mathrm{minE}_{a,b}+ \mathbb{N}^{P}$.

\subsubsection{\textbullet \ Mutual Enabledness Reachability Problem}\mbox{ }\\
\\
Let us formulate an auxiliary problem:
\\ \\
\textbf{Mutual Enabledness Reachability Problem}
\\
\indent\textbf{\emph{Instance:}}P/t-net $\mathrm{S}=(\mathrm{N}, M_0)$, actions $a,b\in T$.\\
\indent\textbf{\emph{Question:}}Is there a marking $M$ such that $M\in \mathrm{E}_{a,b}$ and $M \in[M_0\rangle$ ? \\
\indent(Is there a reachable marking $M$ such that \\ \indent actions $a$ and $b$ are both enabled in $M$?)
\\

\begin{theorem}
\label{t413}
The Mutual Enabledness Reachability Problem is decidable.
\end{theorem}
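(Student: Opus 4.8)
The plan is to reduce the Mutual Enabledness Reachability Problem to the ordinary Reachability Problem (or, more conveniently, to the Coverability Problem), both of which are decidable by the results cited in the Remark after the Coverability Problem. The key observation is that, as noted just before the theorem, $\mathrm{E}_{a,b}=\mathrm{minE}_{a,b}+\mathbb{N}^{P}$, i.e.\ $\mathrm{E}_{a,b}$ is exactly the right closure $\mathrm{minE}_{a,b}\!\uparrow$. Hence asking whether some reachable marking lies in $\mathrm{E}_{a,b}$ is precisely asking whether the single marking $\mathrm{minE}_{a,b}$ is coverable in $\mathrm{S}$: if $M\in[M_0\rangle$ and $\mathrm{minE}_{a,b}\leq M$ then $M$ enables both $a$ and $b$, and conversely if $M\in[M_0\rangle$ enables both $a$ and $b$ then $^\bullet a\leq M$ and $^\bullet b\leq M$, so $\mathrm{minE}_{a,b}\leq M$ by the very definition of $\mathrm{minE}_{a,b}$ (its $i$-th coordinate is $1$ exactly when $^\bullet a[i]=1$ or $^\bullet b[i]=1$).

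So the algorithm is: given $\mathrm{S}=(\mathrm{N},M_0)$ and $a,b\in T$, first compute the vector $\mathrm{minE}_{a,b}$ componentwise from $^\bullet a$ and $^\bullet b$ (a trivial syntactic computation), and then invoke the decision procedure for the Coverability Problem on the instance $(\mathrm{S},\mathrm{minE}_{a,b})$. Answer "yes" iff $\mathrm{minE}_{a,b}$ is coverable in $\mathrm{S}$. Correctness is exactly the equivalence established in the previous paragraph; termination follows from the decidability of coverability (via the Karp/Miller coverability tree, which is finite by Dickson's Lemma and the monotonicity property, as recalled above).

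I expect no serious obstacle here: the whole content is the identification of $\mathrm{E}_{a,b}$ with a right-closed set generated by one effectively computable vector, after which the problem is a direct instance of a known-decidable problem. The only point that deserves a sentence of care is the ``conversely'' direction — one must note that $\mathrm{minE}_{a,b}$ is genuinely the \emph{least} marking enabling both transitions, which is immediate from its definition, so that coverability of $\mathrm{minE}_{a,b}$ is not merely sufficient but also necessary for the existence of a reachable marking in $\mathrm{E}_{a,b}$. (If one prefers to avoid coverability altogether, the same reduction works against the Set Reachability Problem for rational convex sets, since $\mathrm{minE}_{a,b}\!\uparrow$ is right-closed hence rational by Proposition~\ref{p26}; but the coverability route is the most economical.)
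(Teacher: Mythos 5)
Your proposal is correct and essentially identical to the paper's own argument: the paper also reduces the question to coverability of $\mathrm{minE}_{a,b}$, checking in the coverability graph of $\mathrm{S}$ whether some $\omega$-marking covers $\mathrm{minE}_{a,b}$, which is exactly your invocation of the Coverability Problem. Your added remark on the necessity direction (that $\mathrm{minE}_{a,b}$ is the least marking enabling both $a$ and $b$) just makes explicit what the paper leaves implicit.
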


\begin{proof}
Let $M=\mathrm{minE}_{a,b}$. We build a coverability graph for the p/t-net S. We check whether in the graph exists a vertex corresponding to an $\omega$-marking $M'$ such that $M'$ covers $M$. If so, then actions $a$ and $b$ are simultaneously enabled in some reachable marking of the net S. Otherwise, those transitions are never enabled at the same time. 	
\end{proof}
Let $\mathrm{Min}[M_0\rangle$ be the set of minimal (wrt $\leq$) reachable markings of the net S. As members of $\mathrm{Min}[M_0\rangle$ are incomparable, the set $\mathrm{Min}[M_0\rangle$  is finite, by Lemma \ref{l26}.\\ \\
Le us denote by $\mathrm{RE}_{a,b}$ the set of all reachable markings of the net S enabling actions $a$ and $b$ simultaneously: $\mathrm{RE}_{a,b}= \{M \in [M_0\rangle \ | \ Ma\land Mb\} = \mathrm{E}_{a,b} \cap  [M_0\rangle$. \\ \\
Let  $\mathrm{Min}(\mathrm{RE}_{a,b})$ be a set of all minimal reachable markings of the net S enabling action $a$ and $b$ simultaneously.

\subsubsection{\textbullet \ Results of Valk and Jantzen}\mbox{ }\\
\\
In order to construct the set $\mathrm{Min}(\mathrm{RE}_{a,b})$, we put into work the theory of residue sets of Valk/Jantzen \cite{ValkJantzen}.

\begin{definition}[Valk/Jantzen \cite{ValkJantzen}]
A subset $X\subseteq \mathbb{N}^\mathrm{k}$ has property $\mathrm{RES}$  if and only if the problem "Does  $\downarrow v$ intersect $X$?" is decidable for any  $\omega$-vector $v\in \mathbb{N}_\omega^\mathrm{k}$.
\end{definition}

\begin{theorem}[Valk/Jantzen \cite{ValkJantzen}]
\label{twValk}
Let $X\subseteq \mathbb{N}^\mathrm{k}$ be a right-closed set. Then the set $Min(X)$ is effectively computable if and only if $X$ has property $\mathrm{RES}$.
\end{theorem}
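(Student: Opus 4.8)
The plan is to treat the two implications separately; the direction ``$\mathrm{Min}(X)$ computable $\Rightarrow$ $X$ has $\mathrm{RES}$'' is immediate, and essentially all the work lies in the converse. For the easy direction, recall that a right-closed $X$ satisfies $X=\mathrm{Min}(X)\uparrow$ and that $\mathrm{Min}(X)$ is finite by Dickson's Lemma (Lemma~\ref{l26}). Given an $\omega$-vector $v\in\mathbb{N}_\omega^\mathrm{k}$, I would observe that $\downarrow v$ meets $X$ if and only if some $m\in\mathrm{Min}(X)$ satisfies $m\leq v$: if $z\leq v$ and $z\in X$ then some $m\in\mathrm{Min}(X)$ has $m\leq z\leq v$, and conversely any such $m$ lies in $\downarrow v\cap X$ because $m\in\mathbb{N}^\mathrm{k}$. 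Since $\mathrm{Min}(X)$ is finite and explicitly given, the condition ``$\exists m\in\mathrm{Min}(X):m\leq v$'' is decidable, so $X$ has property $\mathrm{RES}$.

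For the converse I would build an incremental search driven by the $\mathrm{RES}$ oracle, after two preliminary remarks. First, membership ``$z\in X$'' is decidable for $z\in\mathbb{N}^\mathrm{k}$, since right-closedness gives $z\in X\iff\downarrow z$ meets $X$, which is a legitimate $\mathrm{RES}$ query (the $\omega$-vector $z$ has no $\omega$-components). Second, for any finite $G=\{m_1,\dots,m_j\}\subseteq\mathbb{N}^\mathrm{k}$ the complement $\mathbb{N}^\mathrm{k}\setminus G\uparrow$ is left-closed and can be written \emph{effectively} as a finite union $\bigcup_i\downarrow v_i$ of cones below $\omega$-vectors $v_i$: the complement of a single $m\uparrow$ is $\bigcup_{\ell:\,m[\ell]>0}\downarrow(\omega,\dots,\omega,m[\ell]-1,\omega,\dots,\omega)$, and one intersects these over $i$ using $\downarrow v\cap\downarrow v'=\downarrow\min(v,v')$ (componentwise, $\omega$ absorbing) together with distributivity of $\cap$ over $\cup$.

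The algorithm then keeps a finite set $G\subseteq\mathrm{Min}(X)$, starting from $G=\emptyset$. At each stage it computes the $v_i$ with $\mathbb{N}^\mathrm{k}\setminus G\uparrow=\bigcup_i\downarrow v_i$ and asks the oracle whether each $\downarrow v_i$ meets $X$. If every answer is ``no'' then $X\subseteq G\uparrow$, which together with $G\subseteq\mathrm{Min}(X)$ forces $\mathrm{Min}(X)=G$, and the algorithm outputs $G$. Otherwise it picks an $i$ with $\downarrow v_i\cap X\neq\emptyset$, enumerates the vectors $z\leq v_i$ in $\mathbb{N}^\mathrm{k}$ by increasing norm and tests $z\in X$ until it finds a witness $z_0\in X\cap\downarrow v_i$ (this halts because the intersection is nonempty), then scans the finite set $\downarrow z_0$ to extract a $\leq$-minimal element $m$ of $X$ below $z_0$, adds $m$ to $G$, and repeats. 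Since $z_0\notin G\uparrow$ we have $z_0\not\geq m_i$ for all current $m_i$, hence also $m\not\geq m_i$; thus $m$ is a genuinely new member of $\mathrm{Min}(X)$, so $G$ strictly grows while staying inside $\mathrm{Min}(X)$, and termination follows from the finiteness of $\mathrm{Min}(X)$ (Lemma~\ref{l26}).

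The step I expect to be the main obstacle is the second preliminary remark: establishing that downward-closed subsets of $\mathbb{N}^\mathrm{k}$ are precisely finite unions of cones $\downarrow v$ with $v\in\mathbb{N}_\omega^\mathrm{k}$, and that the complement of $G\uparrow$ admits the required representation \emph{computably} --- and, hand in hand with it, the correctness argument that the ``all answers no'' outcome genuinely certifies $X\subseteq G\uparrow$, which relies on $G\uparrow$ and $\bigcup_i\downarrow v_i$ partitioning $\mathbb{N}^\mathrm{k}$. Once these structural facts about $\mathbb{N}_\omega^\mathrm{k}$ are nailed down, the search and the termination argument are routine.
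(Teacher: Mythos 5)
This theorem is quoted from Valk/Jantzen \cite{ValkJantzen}; the paper itself offers no proof, so there is nothing internal to compare against, and your blind reconstruction is in fact the classical residue-set argument from that source: the easy direction via finiteness of $\mathrm{Min}(X)$ (Dickson's Lemma) and the test ``$\exists m\in\mathrm{Min}(X):m\leq v$'', and the converse via iteratively growing an antichain $G\subseteq\mathrm{Min}(X)$, effectively writing $\mathbb{N}^{\mathrm{k}}\setminus G\uparrow$ as a finite union of cones $\downarrow v_i$ with $v_i\in\mathbb{N}_\omega^{\mathrm{k}}$, querying the $\mathrm{RES}$ oracle on each $v_i$, and extracting a fresh minimal element from a witness $z_0$. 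The argument is correct; the only step you leave implicit is that an element minimal in the finite set $\downarrow z_0\cap X$ is automatically minimal in all of $X$ (any strictly smaller element of $X$ would itself lie in $\downarrow z_0$), which is precisely what justifies adding $m$ to $G$ and keeping $G\subseteq\mathrm{Min}(X)$.
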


\subsubsection{\textbullet \ Set Reachability Problem}\mbox{ }\\
\\
We also use the fact of decidability of the Set Reachability Problem for rational convex sets (Def. \ref{def_mon},\ref{def_clos}), proved in  \cite{BarOch}.
\\ \\
\textbf{Set Reachability Problem}
\\
\indent\textbf{\emph{Instance:}}P/t-net $\mathrm{S}=(\mathrm{N}, M_0)$ and a set $X\subseteq\mathbb{N}^{P}$.\\
\indent\textbf{\emph{Question:}}Is there a marking $M\in X$, reachable in S? \\

\begin{theorem}
\label{tplus}
The Set Reachability Problem is decidable for rational convex sets in p/t-nets.
\end{theorem}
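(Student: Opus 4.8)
The plan is to reduce the Set Reachability Problem for a rational convex set to finitely many instances of the ordinary Reachability Problem, which is decidable \cite{Mayr,Kosaraju}. First I would use that a rational convex set $X\subseteq\mathbb{N}^{P}$ is a finite union of ``generalized intervals'' $[\ell,u]=\{M\in\mathbb{N}^{P}\mid \ell\le M\le u\}$ with $\ell\in\mathbb{N}^{P}$ and $u\in\mathbb{N}_\omega^{P}$: by Definition~\ref{def_clos} a convex set equals $\downarrow X\cap X\!\uparrow$, the right-closed set $X\!\uparrow$ equals $\{\ell_1,\dots,\ell_r\}\!\uparrow$ for finitely many $\ell_i\in\mathbb{N}^{P}$ and the left-closed set $\downarrow X$ equals $\downarrow\{u_1,\dots,u_n\}$ for finitely many $\omega$-vectors $u_j$ (Dickson's Lemma, Lemma~\ref{l26}), so $X=\bigcup_{i,j}[\ell_i,u_j]$. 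For a rational presentation of $X$ these bounds are effectively computable --- $X\!\uparrow$ is rational and, since $\downarrow v$ is rational for every $\omega$-vector $v$, Theorem~\ref{twGins} lets one test whether $\downarrow v$ meets $X\!\uparrow$, i.e.\ $X\!\uparrow$ has property $\mathrm{RES}$, so $\mathrm{Min}(X\!\uparrow)$ is computable by Theorem~\ref{twValk}, and the $u_j$ are obtained analogously from the (effectively rational) downward closure $\downarrow X$. Hence it suffices to decide, for one interval $[\ell,u]$, whether $[M_0\rangle$ meets $[\ell,u]$, the answer for $X$ being ``yes'' iff this holds for at least one interval.

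Next I would fix $[\ell,u]$ and split $P$ into $\bar J=\{p\in P\mid u(p)\in\mathbb{N}\}$ and $J=P\setminus\bar J$. On $\bar J$ there are only finitely many admissible token counts, so I enumerate all $c\in\mathbb{N}^{\bar J}$ with $\ell(p)\le c(p)\le u(p)$ for every $p\in\bar J$ (discarding the empty interval if some $\ell(p)>u(p)$). For each $c$ the question becomes: is there $M\in[M_0\rangle$ with $M(p)=c(p)$ for all $p\in\bar J$ and $M(p)\ge\ell(p)$ for all $p\in J$?

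The technical heart is to settle each such query by a ``phase and drain'' construction. From $S$ I build a p/t-net $S^{+}$ with two extra control places: $g$, carrying one token initially, and $f$, initially empty. Every original transition is modified to consume and produce one token on $g$, hence to be firable only while $g$ is marked (the ``normal phase''); a fresh transition $\mathit{go}$ with $^{\bullet}\mathit{go}=\{g\}$ and $\mathit{go}^{\bullet}=\{f\}$ fires at most once and switches to the ``cleanup phase''; and for each $p\in J$ there is a drain $e_p$ with $^{\bullet}e_p=\{p,f\}$ and $e_p^{\bullet}=\{f\}$, firable only after $\mathit{go}$ and arbitrarily often. Let $M^{*}$ be the fully specified marking of $S^{+}$ with $M^{*}(p)=c(p)$ on $\bar J$, $M^{*}(p)=\ell(p)$ on $J$, $M^{*}(g)=0$, $M^{*}(f)=1$. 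Inspecting the runs of $S^{+}$ --- the token on $g$ forces all original moves to precede $\mathit{go}$, after which only drains act, and drains change only $J$-coordinates --- one checks that $M^{*}$ is reachable in $S^{+}$ iff some $M\in[M_0\rangle$ satisfies $M(p)=c(p)$ on $\bar J$ and $M(p)\ge\ell(p)$ on $J$. This is an instance of the Reachability Problem, hence decidable; running it over the finitely many intervals and the finitely many choices of $c$ yields the decision procedure.

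The step I expect to be the main obstacle is this last one: designing $S^{+}$ so that interval membership is observed faithfully. The $g/f$ phase control is what prevents the drains from being abused in the middle of a computation, and dropping the upper bounds on the $J$-coordinates is harmless exactly because markings have finite components. Everything else --- finiteness and effectivity of the interval decomposition and the finite enumeration over $\bar J$ --- is routine given Dickson's Lemma and Theorems~\ref{twGins} and~\ref{twValk}. (The same construction, with drains for the period vectors of a linear decomposition, in fact decides Set Reachability for every rational, i.e.\ semilinear, set of markings; convexity is only used here to obtain the convenient interval form.)
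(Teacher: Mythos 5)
Your proof is correct, and it follows the route the paper itself only indicates: the paper does not reproduce a proof of Theorem~\ref{tplus} but defers to \cite{BarOch}, remarking merely that the result generalizes marking reachability and uses decidability of the Reachability Problem. You supply exactly such a reduction: the decomposition of a convex rational set into finitely many effectively computable generalized intervals $[\ell,u]$ (Dickson's Lemma~\ref{l26}, Theorem~\ref{twGins}, Theorem~\ref{twValk}), the finite enumeration over the $\omega$-free coordinates, and the phase-and-drain gadget; I checked the gadget and it is sound, since drains are blocked before the switching transition fires, original transitions are blocked after it, drains only decrease the $\omega$-bounded coordinates, and the target $M^{*}$ is a single fully specified marking, so each query is an instance of ordinary reachability. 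Two small remarks. First, the paper's Definition~\ref{def_clos} literally reads $X=\downarrow X=X\uparrow$, which would make only $\emptyset$ and $\mathbb{N}^{P}$ convex; your reading $X=\downarrow X\cap X\uparrow$ is the intended one (it is the one actually used, e.g., in Proposition~\ref{p414}), so no harm done. Second, your word ``analogously'' for obtaining the $\omega$-vectors $u_j$ glosses over the fact that Theorem~\ref{twValk} applies to right-closed sets: one should pass to the complement of the (effectively rational) downward closure $\downarrow X$, compute its finitely many minimal elements, and read off the ideal decomposition of $\downarrow X$ from them; this is routine with the tools you already cite, so it is a presentational gap, not a mathematical one.
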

The Set Reachability Problem is a generalization of the classical Marking Reachability Problem. The proof uses decidability of the Reachability Problem.
\subsubsection{\textbullet \ Minimal reachable markings enabling two actions simultaneously}\mbox{ }\\ 
\\
Now we are ready to prove:

\begin{proposition}
\label {p414}
The set $\mathrm{Min}(\mathrm{RE}_{a,b})$ can be effectively constructed for a given net  $\mathrm{S}=(P,T,F,M_0)$.
\end{proposition}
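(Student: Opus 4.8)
The plan is to reduce the construction of $\mathrm{Min}(\mathrm{RE}_{a,b})$ to the residue--set machinery of Valk/Jantzen (Theorem~\ref{twValk}), using the decidability of the Set Reachability Problem for rational convex sets (Theorem~\ref{tplus}) as the decision oracle that this machinery needs. Recall that $\mathrm{E}_{a,b}=\mathrm{minE}_{a,b}+\mathbb{N}^{P}$ is right-closed (indeed convex) and that $\mathrm{RE}_{a,b}=\mathrm{E}_{a,b}\cap[M_0\rangle$.

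First I would pass from $\mathrm{RE}_{a,b}$ to its right closure $X:=(\mathrm{RE}_{a,b})\!\uparrow$. Since taking the upward closure never changes the set of $\leq$-minimal elements, $\mathrm{Min}(X)=\mathrm{Min}(\mathrm{RE}_{a,b})$, so it suffices to compute $\mathrm{Min}(X)$. The set $X$ is right-closed by construction, so by Theorem~\ref{twValk} the set $\mathrm{Min}(X)$ (which is finite, by Lemma~\ref{l26}) is effectively computable as soon as $X$ has property $\mathrm{RES}$, that is, as soon as the predicate ``$\downarrow v\cap X\neq\emptyset$'' is decidable for an arbitrary $\omega$-vector $v\in\mathbb{N}_\omega^{P}$. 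Thus the whole statement reduces to establishing that property.

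The key observation is that, because $X=\{x\in\mathbb{N}^{P}\mid(\exists M\in\mathrm{RE}_{a,b})\ M\leq x\}$, we have $\downarrow v\cap X\neq\emptyset$ if and only if there is a reachable marking $M$ with $Ma$, $Mb$ and $M\leq v$; equivalently, if and only if the set $Y_v:=\mathrm{E}_{a,b}\cap\downarrow v$ contains a marking reachable in $\mathrm{S}$. The set $Y_v$ is convex, being an intersection of the right-closed set $\mathrm{E}_{a,b}$ with the left-closed set $\downarrow v$, and it is effectively rational: $\mathrm{E}_{a,b}$ is rational, $\downarrow v$ is rational for every $\omega$-vector $v$, and rational subsets of $\mathbb{N}^{P}$ are effectively closed under intersection (Theorem~\ref{twGins}); alternatively one may simply invoke Proposition~\ref{p26}. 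Hence, by Theorem~\ref{tplus}, it is decidable whether $Y_v$ meets $[M_0\rangle$, and this decides the $\mathrm{RES}$-predicate for $X$. Plugging this decision procedure into the Valk/Jantzen construction yields $\mathrm{Min}(X)=\mathrm{Min}(\mathrm{RE}_{a,b})$ effectively, which is the claim.

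The hard part is precisely the reduction carried out in the previous paragraph: one has to notice that the ``unbounded'' $\omega$-coordinates of $v$ cause no trouble, so that $\downarrow v$, and with it $Y_v$, stays inside the class of rational convex sets to which Theorem~\ref{tplus} applies, and that cutting the right-closed set $\mathrm{E}_{a,b}$ down by $\downarrow v$ does not leave that class. The remaining ingredients are routine bookkeeping: finiteness of $\mathrm{Min}(\mathrm{RE}_{a,b})$ comes from Dickson's Lemma, and the identity $\mathrm{Min}((\mathrm{RE}_{a,b})\!\uparrow)=\mathrm{Min}(\mathrm{RE}_{a,b})$ is immediate from the definition of the right closure.
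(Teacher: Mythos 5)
Your proposal is correct and follows essentially the same route as the paper: pass to the right closure $\mathrm{RE}_{a,b}\!\uparrow$, establish the $\mathrm{RES}$ property by reducing ``$\downarrow v\cap \mathrm{RE}_{a,b}\!\uparrow\neq\emptyset$'' to reachability of the rational convex set $\downarrow v\cap \mathrm{E}_{a,b}$ via Theorem~\ref{tplus}, and then apply Theorem~\ref{twValk}. The only cosmetic difference is notation (your $Y_v$ is the paper's $X$), so there is nothing to add.
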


\begin{proof}
Let us take the right closure $\mathrm{RE}_{a,b}\uparrow$ of the set $\mathrm{RE}_{a,b}$.
\\Note that $\mathrm{Min}(\mathrm{RE}_{a,b})=\mathrm{Min}(\mathrm{RE}_{a,b}\uparrow)$. To show that the set of minimal elements of the set $\mathrm{RE}_{a,b}$ is effectively computable, it is enough to demonstrate that the set $\mathrm{RE}_{a,b}\uparrow$  has the property RES (i.e. for any  $\omega$-vector $v\in \mathbb{N}_\omega^{P}$ the problem "$(\downarrow  v\cap RE_{a,b}\uparrow \neq \emptyset)$?" is decidable) and apply Theorem \ref{twValk}. \\
Let $X=\downarrow v\cap E_{a,b}$, where $E_{a,b}=\mathrm{min}E_{a,b}+ \mathbb{N}^{P}$. Let us notice, that  $\downarrow v$ is a convex set, hence rational (Proposition \ref{p26}). The set $E_{a,b}$ is also a rational convex set. As an intersection of convex rational sets, the set $X$ is convex and rational (Theorem \ref{twGins}) as well.\\
Hence, putting into work decidability of the Set Reachability Problem for rational convex sets (Theorem \ref{tplus}) we decide whether any marking from the set $X$ is reachable in S. Therefore, we can decide whether the set $X'= \downarrow v\cap \mathrm{RE}_{a,b}$ is nonempty. (It is the case when at least one marking from the set $X$ is reachable in S.) Let us notice that the set $X''=\downarrow v\cap \mathrm{RE}_{a,b}\uparrow$  is nonempty if and only if the set $X'$ is nonempty. That is why the set $\mathrm{RE}_{a,b}\uparrow$  has the property RES, and consequently the set $\mathrm{Min}(\mathrm{RE}_{a,b})$ is effectively computable by Theorem \ref{twValk}. 	
\end{proof}

\begin{example}
\label{e415}

\begin{figure}[h]
\centering
\includegraphics[width=0.4\textwidth]{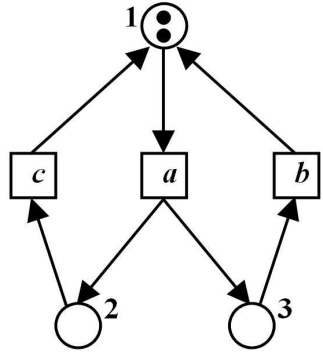}
\caption{A p/t-net for Ex.\ref{e415}.}
\label{Fig4}
\end{figure}

The set of all minimal reachable markings of the net depicted in Figure \ref{Fig4} enabling action $a$ and $b$ simultaneously, is $\mathrm{Min}(\mathrm{RE}_{a,b})=\{[1,1,1], [2,0,1]\}$.
\end{example}

\begin{proposition}
\label{p416}
If there exists a marking  $M \in\mathrm{RE}_{a,b}$ such that the execution of an action $a$ in $M$ pushes the execution of an action $b$ away for more than $\mathrm{k}$ steps (for some $\mathrm{k}\in\mathbb{N}$), then there exists some minimal marking  $M'\in\mathrm{Min}(\mathrm{RE}_{a,b})$ such that the execution of an action $a$ in $M'$ pushes the execution of an action $b$ away for more than $\mathrm{k}$ steps, too.
\end{proposition}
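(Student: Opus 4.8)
The idea is to descend from $M$ to a minimal marking of $\mathrm{RE}_{a,b}$ lying below it, and then to show, by a monotonicity argument, that this minimal marking must itself exhibit the obstruction --- otherwise $M$ would not.

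First I would fix such a minimal marking. Since $M\in\mathrm{RE}_{a,b}\subseteq\mathbb{N}^{P}$ and the componentwise order on $\mathbb{N}^{P}$ is well-founded (and $\mathrm{Min}(\mathrm{RE}_{a,b})$ is finite by Dickson's Lemma, Lemma~\ref{l26}), one can pick $M'\in\mathrm{Min}(\mathrm{RE}_{a,b})$ with $M'\leq M$. As $M'\in\mathrm{RE}_{a,b}$ we have $M'a$ and $M'b$, so the statement ``the execution of $a$ in $M'$ pushes the execution of $b$ away for more than $\mathrm{k}$ steps'' is meaningful for $M'$; this is what I claim holds.

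Next I would argue by contradiction. Assume the step $M'a$ does \emph{not} push $b$ away for more than $\mathrm{k}$ steps, i.e. there is $w\in T^{*}$ with $|w|\leq\mathrm{k}$ and $M'awb$. I would transport this short re-enabling run up to $M$: firability of the string $aw$ from $M'$ together with $M'\leq M$ and Fact~\ref{f27} (monotonicity of p/t-nets) yields $M\,aw$; moreover, if $M'awM_{1}'$ and $M\,awM_{1}$, then $M_{1}'=M'+\Delta\leq M+\Delta=M_{1}$, where $\Delta$ is the displacement of $aw$, which does not depend on the marking from which $aw$ is fired. From $M'awb$ we get $M_{1}'b$, hence ${}^{\bullet}b\leq M_{1}'\leq M_{1}$ and therefore $M_{1}b$; that is, $M\,awb$ with $|w|\leq\mathrm{k}$, contradicting the hypothesis that the execution of $a$ in $M$ pushes $b$ away for more than $\mathrm{k}$ steps. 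Hence $M'$ has the required property, which proves the proposition.

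No step here is genuinely hard; the only point that needs a little care is the observation that a firing sequence has a marking-independent displacement, so the inequality $M'\leq M$ is preserved all along the run $aw$ --- which is precisely what lets monotonicity carry the computation from the small marking up to the large one. (This is also where the restriction to p/t-nets matters: for non-monotonic nets such as inhibitor nets the analogous transfer fails, in line with the remark in the introduction.)
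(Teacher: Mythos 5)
Your proposal is correct and follows essentially the same route as the paper: choose $M'\in\mathrm{Min}(\mathrm{RE}_{a,b})$ with $M'\leq M$ and derive a contradiction from a short re-enabling sequence $M'awb$ by transferring it to $M$ via the monotonicity property (Fact~\ref{f27}). Your explicit displacement argument merely spells out what the paper's appeal to monotonicity already contains.
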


\begin{proof}
Let $M$ be a marking, such that the execution of an action $a$ in $M$ pushes the execution of an action $b$ away for more than k steps (for some $\mathrm{k}\in\mathbb{N}$). Let $M'\in  \mathrm{Min}(\mathrm{RE}_{a,b})$ such that $M'\leq M$. Such a marking has to exist. Suppose that there is a string $w\in T^*$, $|w|\leq \mathrm{k}$ such that $M'awb$. Then obviously also $Mawb$ (from the monotonicity property - Fact \ref{f27}). We obtain a contradiction. Hence, the execution of an action $a$ in $M'$ postpones the execution of $b$ for more than k steps. 	
\end{proof}

\subsubsection{\textbullet \ EL-k Transition Persistence Problem and EL-k Net Persistence Problem}\mbox{ }\\
\\
Now, we are ready to introduce the following problem:
\\ \\
\textbf{EL-k Transition Persistence Problem}
\\
\indent\textbf{\emph{Instance:}}P/t-net $\mathrm{S}=(\mathrm{N}, M_0)$, ordered pair $(a,b)\in T\times T, b\neq a$, $\mathrm{k}\in\mathbb{N}$.\\
\indent\textbf{\emph{Question:}}Is there a reachable marking $M \in [M_0\rangle$ such that  \\
\indent\indent \indent\indent$Ma \land Mb \land \lnot[(\exists w\in T^*) |w|\leq \mathrm{k} \land Mawb]$?
\\
\indent\indent \indent\indent(Does $a$ postpone $b$ for more than k steps?)

\begin{theorem}
\label{t417}
The EL-k Transition Persistence Problem is decidable.
\end{theorem}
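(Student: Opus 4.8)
The plan is to reduce the problem to finitely many bounded searches in reachability trees, leaning on the two propositions just established. First I would invoke Proposition~\ref{p414} to effectively construct the finite set $\mathrm{Min}(\mathrm{RE}_{a,b})$ of minimal reachable markings that enable both $a$ and $b$. By construction every member of this set is reachable in $\mathrm{S}$ and enables $a$ and $b$, so each is a legitimate candidate witness.

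Next, for each $M'\in\mathrm{Min}(\mathrm{RE}_{a,b})$ I would test whether the execution of $a$ in $M'$ postpones $b$ for more than $\mathrm{k}$ steps. Put $M''=M'a$. Using the technique of Theorem~\ref{t45}, build the $(\mathrm{k}+1)$-component of the reachability tree of $(\mathrm{N},M'')$ and check whether some path of length at most $\mathrm{k}$ starting at $M''$ ends in a marking enabling $b$; equivalently, whether there is $w\in T^*$ with $|w|\le \mathrm{k}$ and $M''wb$, i.e. $M'awb$. If no such $w$ exists, then $a$ postpones $b$ in $M'$ for more than $\mathrm{k}$ steps. This check terminates, since the $(\mathrm{k}+1)$-component is finite, and there are only finitely many $M'$ to examine.

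The algorithm then answers \textbf{yes} if the check succeeds for at least one $M'\in\mathrm{Min}(\mathrm{RE}_{a,b})$, and \textbf{no} otherwise. Correctness in the positive direction is immediate: such an $M'$ is a reachable marking on which $a$ postpones $b$ for more than $\mathrm{k}$ steps, so it witnesses a yes-instance. For the negative direction, suppose toward a contradiction that some reachable marking $M\in\mathrm{RE}_{a,b}$ witnesses postponement of $b$ by $a$ for more than $\mathrm{k}$ steps; then Proposition~\ref{p416} yields a minimal marking $M'\in\mathrm{Min}(\mathrm{RE}_{a,b})$ with the same property, contradicting the (negative) outcome of the checks.

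Most of the difficulty has already been absorbed into Propositions~\ref{p414} and~\ref{p416} (and, through them, into the decidability of the Set Reachability Problem and the Valk/Jantzen characterization via the RES property); what remains here is routine bookkeeping. The one place that needs care is the passage from ``some reachable bad marking exists'' to ``some minimal bad marking exists,'' which is precisely Proposition~\ref{p416} and rests on the monotonicity property (Fact~\ref{f27}): if after firing $a$ one can fire $w$ and then $b$ from a small marking, the same sequence remains available from any larger one, so examining the finitely many minimal markings is without loss of generality.
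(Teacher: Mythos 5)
Your proposal is correct and follows essentially the same route as the paper: construct $\mathrm{Min}(\mathrm{RE}_{a,b})$ via Proposition~\ref{p414}, fire $a$ from each minimal marking, search the $(\mathrm{k}+1)$-component of the resulting reachability tree for $b$, and justify completeness by Proposition~\ref{p416} and monotonicity. If anything, your version is slightly cleaner, since you state the final yes/no quantification over the minimal markings explicitly and absorb the paper's preliminary Mutual Enabledness check into the case where $\mathrm{Min}(\mathrm{RE}_{a,b})$ is empty.
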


\begin{proof}
We introduce an algorithm of deciding if an action $a$ pushes the execution of an action $b$ away for more than k steps in some reachable marking $M$.
\begin{enumerate}
\item We check whether both actions $a$ and $b$ are enabled in some reachable marking (using decidability of Mutual Enabledness Reachability Problem).
\begin{enumerate}
\item If not, we answer NO.
\item Otherwise:
\begin{enumerate}
\item We build the set $\mathrm{Min}(\mathrm{RE}_{a,b})$. This set can be effectively computed by Proposition \ref{p414} using Valk/Jantzen algorithm.
\item For all markings $M_1 \in \mathrm{Min}(\mathrm{RE}_{a,b})$:\\
$M_2:=M_1a$. \\
We build an initial part of the depth of k+1 (the (k+1)-component) of the reachability tree of $(\mathrm{N}, M_2)$. If the piece has an edge labeled by $b$, we answer NO. Otherwise we answer YES. 	
\end{enumerate}
\end{enumerate}
\end{enumerate}
\end{proof}
And now the proof of decidability of the EL-k Net Persistence Problem is ready.

\begin{theorem}
\label{t418}
The EL-k Net Persistence Problem is decidable (for any $\mathrm{k}\in \mathbb{N}$).
\end{theorem}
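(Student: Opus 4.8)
The plan is to reduce the EL-k Net Persistence Problem to finitely many instances of the EL-k Transition Persistence Problem and invoke Theorem~\ref{t417}. Unfolding Definitions~\ref{d41} and~\ref{d43}, the net $\mathrm{S}=(\mathrm{N},M_0)$ is e/l-k-persistent if and only if for every reachable marking $M\in[M_0\rangle$ and every action $a\in T$ enabled in $M$ the step $Ma$ is e/l-k-persistent, i.e. if and only if for every ordered pair $(a,b)\in T\times T$ with $b\neq a$ and every $M\in[M_0\rangle$ we have $Ma\land Mb \Rightarrow (\exists w\in T^*)\,(|w|\leq \mathrm{k}\land Mawb)$. Taking the negation, $\mathrm{S}$ fails to be e/l-k-persistent if and only if there is an ordered pair $(a,b)\in T\times T$, $b\neq a$, and a reachable marking $M\in[M_0\rangle$ with $Ma\land Mb\land \lnot[(\exists w\in T^*)\,|w|\leq \mathrm{k}\land Mawb]$ --- which is precisely a positive instance of the EL-k Transition Persistence Problem for the pair $(a,b)$.

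First I would iterate over all ordered pairs $(a,b)\in T\times T$ with $b\neq a$; since $T$ is finite there are at most $|T|^2-|T|$ of them. For each such pair I would run the decision procedure of Theorem~\ref{t417}. If every one of these finitely many instances answers NO, the algorithm reports that $\mathrm{S}$ is e/l-k-persistent; if at least one answers YES, it reports that $\mathrm{S}$ is not e/l-k-persistent. Correctness is immediate from the equivalence established above, and termination follows from the finiteness of $T\times T$ together with the decidability granted by Theorem~\ref{t417}.

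I do not expect a genuine obstacle at this level: all the real difficulty has already been absorbed into the auxiliary machinery. The delicate points are (i) that it suffices to inspect only the finitely many markings of $\mathrm{Min}(\mathrm{RE}_{a,b})$ rather than the whole (possibly infinite) set $\mathrm{RE}_{a,b}$, which is exactly Proposition~\ref{p416} and rests on the monotonicity property (Fact~\ref{f27}); (ii) that $\mathrm{Min}(\mathrm{RE}_{a,b})$ is effectively computable, which is Proposition~\ref{p414}, obtained from the Valk/Jantzen theorem (Theorem~\ref{twValk}) and decidability of the Set Reachability Problem for rational convex sets (Theorem~\ref{tplus}); and (iii) that, from each $M_1\in\mathrm{Min}(\mathrm{RE}_{a,b})$, testing whether $a$ postpones $b$ by more than $\mathrm{k}$ steps amounts to inspecting the finite $(\mathrm{k}{+}1)$-component of the reachability tree of $(\mathrm{N},M_1a)$ for an edge labelled $b$. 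Given Theorem~\ref{t417}, the present proof is just the routine assembly of these pieces over the finite set of transition pairs. (It is worth noting that step (i) is where monotonicity is essential, so the same reduction would not go through for non-monotonic models such as inhibitor nets.)
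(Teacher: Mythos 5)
Your proposal is correct and coincides with the paper's own argument: Theorem~\ref{t418} is proved there exactly by reducing to the EL-k Transition Persistence Problem and checking that the algorithm of Theorem~\ref{t417} answers NO for all ordered pairs $(a,b)\in T\times T$, $a\neq b$. Your additional remarks on where monotonicity and the Valk/Jantzen machinery enter simply restate the content of Propositions~\ref{p414} and~\ref{p416}, which the paper likewise relies on.
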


\begin{proof}
S is e/l-k-persistent iff the algorithm solving EL-k Transition Persistence Problem answers NO for all ordered pairs $(a,b)\in T\times T$, $a\neq b$.
\end{proof}
\vspace*{-1cm}
\begin{example}
\label{e_fig_g}
\begin{figure}[h]
\centering
\includegraphics[width=0.4\textwidth]{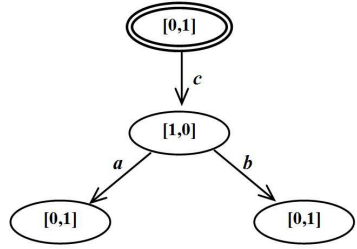}
\caption{2-component of the reachability tree of the net of Figure \ref{Fig_E}.}
\label{Fig_G}
\end{figure}

Let us check whether the action $a$ of the net S of Figure \ref{Fig_E} postpones the action $b$ for more than 1 step.\\
Actions $a$ and $b$ are both enabled in the initial marking. \\
The set $\mathrm{Min}(\mathrm{RE}_{a,b})$ consists of a single marking $M_1=[1,0]$. 
We take $M_2=M_1a=[0,1]$. We build a 2-component of the reachability tree of the net $(\mathrm{S},M_2)$. The tree is depicted in Figure \ref{Fig_G}. The tree has an edge labeled by $b$ so the action $a$ does not postpone the action $b$ for more than 1 step.
\end{example}

\subsubsection{\textbullet \ EL-k Transition Persistence Problem - an alternative approach}\mbox{ }\\
\\
In order to show decidability of the EL-k Net Persistence Problem we can use the technique used for proving decidability of LL Net Persistence Problem and EL Net Persistence Problem, presented in \cite{BarOch}.
\\ \\
Again, we deal with the EL-k Transition Persistence Problem, crucial for the proof. We show an alternative proof of decidability of that problem.

\mbox{ }\\
\textbf{EL-k Transition Persistence Problem}
\\
\indent\textbf{\emph{Instance:}}P/t-net $\mathrm{S}=(\mathrm{N}, M_0)$, ordered pair $(a,b)\in T\times T, b\neq a$, $\mathrm{k}\in\mathbb{N}$.\\
\indent\textbf{\emph{Question:}}Is there a reachable marking $M \in [M_0\rangle$ such that  \\
\indent\indent \indent\indent$Ma \land Mb \land \lnot[(\exists w\in T^*) |w|\leq \mathrm{k} \land Mawb]$?
\\
\\
Let us define, in order to reformulate the problem above, the following sets of markings:\\
$\mathrm{E}_a = \{M\in \mathbb{N}^{P} \ | \ Ma\}$ - markings enabling $a$ \\
$\mathrm{E}_b = \{M\in \mathbb{N}^{P} \ | \ Mb\}$ - markings enabling $b$ \\
$\mathrm{E}_{a(k)b} = \{M\in \mathbb{N}^{P} \ | (\exists w\in T^*) |w|\leq\mathrm{k} \land Mawb\}$ - markings enabling a such that after the execution of a the action b is potentially enabled after at most k steps.
\\
\\
Now we can reformulate the question of the  problem above:
\\
\indent\textbf{\emph{Question:}}Is the set
$\mathrm{E}_a\cap \mathrm{E}_b \cap (\mathbb{N}^{P}-\mathrm{E}_{a(k)b})$ reachable in $(\mathrm{N}, M_0)$?
\\
\\
Let us look again at\\ \\
\textbf{Theorem \ref{t418}}. 
\textit{The EL-k Net Persistence Problem is decidable (for any $\mathrm{k}\in \mathbb{N}$).}

\begin{proof}
First note that, by the monotonicity property (Fact \ref{f27}), the set \newline$\mathrm{E}_a\cap \mathrm{E}_b\cap (\mathbb{N}^{|\mathrm{P}|}-\mathrm{E}_{a(k)b})$ is convex, thus rational (by Proposition \ref{p26}). The rational expressions for $E_a$ and $E_b$ are  $E_a=^\bullet a+\mathbb{N}^\mathrm{k}$ and $E_b=^\bullet b+\mathbb{N}^\mathrm{k}$. Clearly, the set $\mathrm{E}_{a(k)b}$ is right-closed, by the monotonicity property. We shall prove that it has the property RES. Namely, $\downarrow\!v$ ($v\in\mathbb{N}_\omega^{P}$)intersects $\mathrm{E}_{a(k)b}$ if and only if $^\bullet a\leq v$ (i.e. $a$ is enabled in $v$) and there is a path in the reachability tree, limited to (k+1) first levels, of the net $(\mathrm{N}, v')$, where $v'$ is an $\omega$ -marking obtained from $v$ by the execution of $a$, containing an arc labelled by $b$. It is obviously decidable. Hence, the set $\mathrm{E}_{a(k)b}$ has the property RES, thus (by Theorem \ref{twValk}) the set $\mathrm{Min}(\mathrm{E}_{a(k)b})$ is effectively computable. As $\mathrm{E}_{a(k)b}$ is right-closed, we get the rational expression for it: $\mathrm{E}_{a(k)b} = \mathrm{Min}(\mathrm{E}_{a(k)b})+\mathbb{N}^{P}$. Finally, using Theorem \ref{twGins} of Ginsburg/Spanier \cite{GinsburgSpanier}, we compute rational expression for $\mathrm{E}_a\cap \mathrm{E}_b \cap (\mathbb{N}^{P}-\mathrm{E}_{a(k)b})$ and Theorem \ref{twValk} yields decidability of the problem. 	
\end{proof}

\subsection{Collapsing of the hierarchy of e/l-persistence}

\subsubsection{\textbullet \ k-enabledness}\mbox{ }\\
\\
Let us recall the well-known fact, that follows from the Dickson's Lemma (Lemma~\ref{l26}).

\begin{fact}
\label{f47}
Every infinite sequence of markings contains an infinite increasing (not necessarily strictly) subsequence of markings.
\end{fact}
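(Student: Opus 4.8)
The plan is to deduce this from Dickson's Lemma (Lemma~\ref{l26}) via the usual dichotomy on "dominated" positions in the sequence. Write $k=|P|$, so markings are vectors of $\mathbb{N}^k$, and fix an infinite sequence $(M_n)_{n\ge 1}$. I would call an index $i$ \emph{maximal} if no later term dominates it, that is $M_i\not\le M_j$ for all $j>i$, and split the argument according to whether the set of maximal indices is finite or infinite.

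If it is finite, pick $N$ above every maximal index; then every index $i>N$ admits some $j>i$ with $M_i\le M_j$, so starting from $i_0:=N+1$ and iterating this choice I obtain $i_0<i_1<i_2<\cdots$ with $M_{i_0}\le M_{i_1}\le M_{i_2}\le\cdots$, which is precisely the required infinite non-decreasing subsequence. If the set of maximal indices is infinite, say $i_1<i_2<\cdots$, then $M_{i_p}\not\le M_{i_q}$ for all $p<q$, and I want a contradiction. Here I would pass to the set $Y=\{M_{i_p}:p\ge 1\}$: since $\mathbb{N}^k$ is well-founded (no infinite strictly decreasing chain of vectors), every member of $Y$ lies above some element of $\mathrm{Min}(Y)$, and $\mathrm{Min}(Y)$, being an antichain, is finite by Lemma~\ref{l26}; hence by pigeonhole a single $m\in\mathrm{Min}(Y)$ satisfies $m\le M_{i_p}$ for infinitely many $p$, and taking such a $p$ larger than the index $p_0$ at which $m$ itself occurs gives $M_{i_{p_0}}=m\le M_{i_p}$ with $p_0<p$, contradicting maximality. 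So only the first case can occur, and it supplies the subsequence.

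The delicate point---the part I expect to need care---is the second case: a sequence in which no earlier entry is dominated by a later one need not be an antichain, so Dickson's Lemma cannot be applied to it directly; the remedy is to apply Lemma~\ref{l26} only to the finite set $\mathrm{Min}(Y)$ and to lean on well-foundedness of $\mathbb{N}^k$ so that these minimal elements dominate everything in $Y$. An alternative route that sidesteps the case split is to colour each pair $i<j$ by whether $M_i\le M_j$, $M_i>M_j$, or $M_i$ and $M_j$ are incomparable, apply Ramsey's theorem for pairs to get an infinite monochromatic index set $H$, eliminate the colour "$>$" by well-foundedness and the colour "incomparable" by Lemma~\ref{l26}, and conclude that $(M_n)_{n\in H}$ is the desired non-decreasing subsequence.
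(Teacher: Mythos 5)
Your proof is correct, and it takes the route the paper itself indicates: the paper offers no explicit argument for this fact, merely remarking that it follows from Dickson's Lemma (Lemma~\ref{l26}), and your maximal-index dichotomy (or the Ramsey variant) is a sound, standard way of filling in exactly that derivation, using well-foundedness of $\mathbb{N}^{k}$ together with finiteness of the antichain $\mathrm{Min}(Y)$. In particular your care in case 2 --- applying Lemma~\ref{l26} only to $\mathrm{Min}(Y)$ rather than to the non-antichain sequence itself --- is exactly the detail the paper's one-line justification glosses over.
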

Recall also that p/t-nets have the monotonicity property - Fact \ref{f27}.

\mbox{ }\\
Let us define the notion of k-enabledness.
\begin{definition}[k-enabledness]
\label{d48}
Let $\mathrm{S}=(P,T,F,M_0)$ be a p/t-net, let $M$ be a marking. For $\mathrm{k} \in \mathbb{N}$ we say that the action $a\in T$ is \emph{k-enabled} in the marking $M$ if and only if $\exists w \in T^*$, such that $|w|\leq \mathrm{k} \land Mwa$.
\end{definition}
\newpage
Now, we can show:

\begin{lemma}
\label{l49}
Let $\mathrm{S}$ be a p/t-net. For an arbitrary $a \in T$ there exists a natural number $\mathrm{k}_a\in \mathbb{N}$, such that in every marking $M$ the transition $a$ is $\mathrm{k}_a$-enabled or it is dead.
\end{lemma}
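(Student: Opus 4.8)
The plan is to argue by contradiction, exploiting Dickson's Lemma (Lemma \ref{l26}) together with the monotonicity property of p/t-nets (Fact \ref{f27}). Fix a transition $a\in T$. For each $\mathrm{k}\in\mathbb{N}$, let $D_{\mathrm{k}}$ be the set of markings in which $a$ is live but \emph{not} $\mathrm{k}$-enabled, i.e. markings $M$ with $Mwa$ for some $w\in T^*$ but $|u|>\mathrm{k}$ for every $u$ with $Mua$. If the lemma fails, then $D_{\mathrm{k}}\neq\emptyset$ for every $\mathrm{k}$, so we may pick a marking $M_{\mathrm{k}}\in D_{\mathrm{k}}$ for each $\mathrm{k}$; this yields an infinite sequence $(M_{\mathrm{k}})_{\mathrm{k}\in\mathbb{N}}$ of markings.

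Next I would apply Fact \ref{f47}: the sequence $(M_{\mathrm{k}})$ contains an infinite increasing subsequence $M_{\mathrm{k}_0}\leq M_{\mathrm{k}_1}\leq\cdots$ with $\mathrm{k}_0<\mathrm{k}_1<\cdots$. Take $M:=M_{\mathrm{k}_0}$. Since $a$ is live in $M$, there is $w\in T^*$ with $MwaM''$; let $n:=|w|$. Pick any index $j$ with $\mathrm{k}_j\geq n$. Now $M=M_{\mathrm{k}_0}\leq M_{\mathrm{k}_j}$, so by the monotonicity property (Fact \ref{f27}) we get $M_{\mathrm{k}_j}wa$, i.e. $a$ is $n$-enabled in $M_{\mathrm{k}_j}$, hence $\mathrm{k}_j$-enabled since $\mathrm{k}_j\geq n$. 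But $M_{\mathrm{k}_j}\in D_{\mathrm{k}_j}$, meaning $a$ is \emph{not} $\mathrm{k}_j$-enabled in $M_{\mathrm{k}_j}$ --- a contradiction. Therefore some $\mathrm{k}_a\in\mathbb{N}$ bounds all of these, and in every marking $M$ the transition $a$ is either $\mathrm{k}_a$-enabled or dead.

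I expect the main (and only real) obstacle to be choosing the right quantity to minimize so that Dickson's Lemma applies cleanly: one must arrange the increasing subsequence and the witness word $w$ so that a \emph{single} small marking in the subsequence "infects" a later one whose required $\mathrm{k}$ has already exceeded $|w|$. Once the set $D_{\mathrm{k}}$ and the increasing subsequence are set up, monotonicity does all the remaining work; no calculation is needed. A minor point to state carefully is that "$\mathrm{k}_a$-enabled" is monotone in $\mathrm{k}$ (if $a$ is $n$-enabled then it is $\mathrm{k}$-enabled for all $\mathrm{k}\geq n$), which is immediate from Definition \ref{d48}.
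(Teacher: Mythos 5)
Your proposal is correct and follows essentially the same route as the paper's own proof: assume failure, pick for each $\mathrm{k}$ a marking where $a$ is live but not $\mathrm{k}$-enabled, extract an increasing subsequence via Fact \ref{f47} (Dickson), and use the monotonicity property (Fact \ref{f27}) to transfer a witness word from the first marking of the subsequence to a later one with a larger threshold, yielding a contradiction. Your write-up is in fact slightly more explicit than the paper's (naming the sets $D_{\mathrm{k}}$ and the witness length $n$), but there is no substantive difference.
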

\begin{proof}
Suppose that the lemma does not hold for some action $a\in T$. It means that for each $\mathrm{k}\in\mathbb{N}$ there is a marking $M$ such that $a$ is not k-enabled but not dead. This means that $a$ is $\mathrm{k}'$-enabled for some $\mathrm{k}'>\mathrm{k}$. Thus, there exists  an infinite sets of markings $M_1, M_2,\ldots$ and integers $\mathrm{k}_1<\mathrm{k}_2<\ldots$, such that the action $a$ is live in each marking $M_\mathrm{i}$ and it is not $\mathrm{k}_\mathrm{i}$-enabled in $M_\mathrm{i}$ for all $\mathrm{i}=1,2,\ldots$.
Let us choose (by Fact \ref{f47}) an infinite increasing sequence of markings $M_{\mathrm{i}1}\leq M_{\mathrm{i}2}\leq \ldots$.
Since the action $a$ is live in $M_{\mathrm{i}1}$, it is k-enabled in $M_{\mathrm{i}1}$, for some $\mathrm{k}\in\mathbb{N}$. As the strictly increasing sequence $\mathrm{k}_1<\mathrm{k}_2<\ldots$ is infinite, $\mathrm{k}<\mathrm{k}_{\mathrm{ij}}$ for some j. By the monotonicity property (Fact \ref{f27}), the action $a$ is k-enabled, hence  $\mathrm{k}_{\mathrm{ij}}$-enabled in the marking $M_{\mathrm{ij}}$. Contradiction.	
\end{proof}
\mbox{ }\\
\textbf{Remark:} Note that the proof of Lemma \ref{l49} is purely existential, it does not present any algorithm for finding k.
\\
\\
Now, we are ready to formulate the main theorem of the section:

\begin{theorem}
\label{t410}
If a p/t-net is e/l-persistent, then it is e/l-$\mathrm{k}$-persistent for some $\mathrm{k} \in \mathbb{N}$.\\
In words: Whenever an action is disabled by another one, it is pushed away for not more than $\mathrm{k}$-steps.
\end{theorem}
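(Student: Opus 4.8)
The plan is to argue by contradiction, combining Lemma~\ref{l49} with Fact~\ref{f47} (the increasing-subsequence consequence of Dickson's Lemma) and the monotonicity property (Fact~\ref{f27}). Assume the net $\mathrm{S}$ is e/l-persistent but not e/l-$\mathrm{k}$-persistent for any $\mathrm{k}\in\mathbb{N}$. Unwinding the definitions, this means that for every $\mathrm{k}$ there is a reachable marking $M_\mathrm{k}$ and transitions $a,b$ with $a\neq b$ such that $M_\mathrm{k}a$, $M_\mathrm{k}b$, and yet no string $w$ with $|w|\le\mathrm{k}$ satisfies $M_\mathrm{k}awb$; since $T\times T$ is finite, we may fix a single pair $(a,b)$ that witnesses the failure for infinitely many $\mathrm{k}$, and hence (relabelling) obtain reachable markings $M_1,M_2,\ldots$ enabling both $a$ and $b$ such that in $M_\mathrm{i}$ the execution of $a$ postpones $b$ for strictly more than $\mathrm{i}$ steps.

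Next I would pass to the markings obtained after firing $a$: set $N_\mathrm{i}$ with $M_\mathrm{i}aN_\mathrm{i}$. In $N_\mathrm{i}$ the transition $b$ is live (the net is e/l-persistent, so firing $a$ in a marking enabling both $a$ and $b$ cannot kill $b$), but $b$ is not $\mathrm{i}$-enabled in $N_\mathrm{i}$, i.e. no string $w$ with $|w|\le\mathrm{i}$ has $N_\mathrm{i}wb$. Now apply Lemma~\ref{l49} to the transition $b$: there is a fixed $\mathrm{k}_b\in\mathbb{N}$ such that in \emph{every} marking $b$ is either $\mathrm{k}_b$-enabled or dead. But $b$ is live (not dead) in $N_\mathrm{i}$, hence $b$ is $\mathrm{k}_b$-enabled in $N_\mathrm{i}$ for every $\mathrm{i}$. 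Choosing any index $\mathrm{i}>\mathrm{k}_b$ contradicts the fact that $b$ is not $\mathrm{i}$-enabled in $N_\mathrm{i}$. Therefore $\mathrm{S}$ is e/l-$\mathrm{k}$-persistent for some $\mathrm{k}$; in fact the argument shows one may take $\mathrm{k}=\max\{\mathrm{k}_b\mid b\in T\}$.

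The step I expect to carry the real weight is Lemma~\ref{l49}, which is already proved in the excerpt; granting it, the theorem is essentially immediate, so the only care needed here is the bookkeeping: extracting a single offending pair $(a,b)$ from the infinitely many $\mathrm{k}$ by pigeonhole on the finite set $T\times T$, and correctly translating ``$a$ postpones $b$ for more than $\mathrm{k}$ steps in $M$'' into ``$b$ is not $\mathrm{k}$-enabled in $M_\mathrm{i}a$''. One should also note explicitly that e/l-persistence is exactly what guarantees $b$ remains live (equivalently, is not dead) after firing $a$, since Lemma~\ref{l49} only yields $\mathrm{k}_b$-enabledness for markings in which $b$ is \emph{not} dead; without e/l-persistence the argument breaks, which matches the remark that the result fails for non-monotonic nets such as inhibitor nets. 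Finally, as the Remark after Lemma~\ref{l49} warns, this proof is non-constructive: it gives no procedure for computing $\mathrm{k}$, which is precisely why Theorem~\ref{t420} is needed separately.
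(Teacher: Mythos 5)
Your proof is correct and follows essentially the same route as the paper: the whole weight rests on Lemma~\ref{l49}, with e/l-persistence used exactly as in the paper to ensure $b$ stays live (not dead) after $a$ fires, so that $b$ is $\mathrm{k}_b$-enabled and one may take $\mathrm{k}=\max\{\mathrm{k}_b \mid b\in T\}$. The paper simply states this directly rather than wrapping it in a contradiction-plus-pigeonhole argument, but the content is the same, and your closing remarks on non-constructiveness and the role of monotonicity match the paper's.
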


\begin{proof}
If the net is e/l-persistent, then no action kills another enabled one. From the Lemma \ref{l49} we know, that if an action $a \in T$ is not dead then it is $\mathrm{k}_a$-enabled.
Let us take $\mathrm{K}=\mathrm{max}\{\mathrm{k}_a | a\in T\}$, for the numbers $\mathrm{k}_a$ from the Lemma \ref{l49}. One can see that every action in the net that is not dead, is K-enabled. Thus, the execution of any action may postpone the execution of an action $a$ for at most K steps. 	
So we have the implication: if a p/t-net is e/l-persistent, then it is e/l-K-persistent, for K defined above. 	
\end{proof}
\mbox{ }\\
\textbf{Remark:} As the proof of Lemma \ref{l49} explicitly uses the monotonicity property of p/t-nets, the Theorem \ref{t410} holds only for nets satisfying this property. The following example shows that Theorem \ref{t410} does not hold for nets without the monotonicity property (for instance, inhibitor nets).

\begin{example}
\label{e412}

\begin{figure}[h]
\centering
\includegraphics[width=0.5\textwidth]{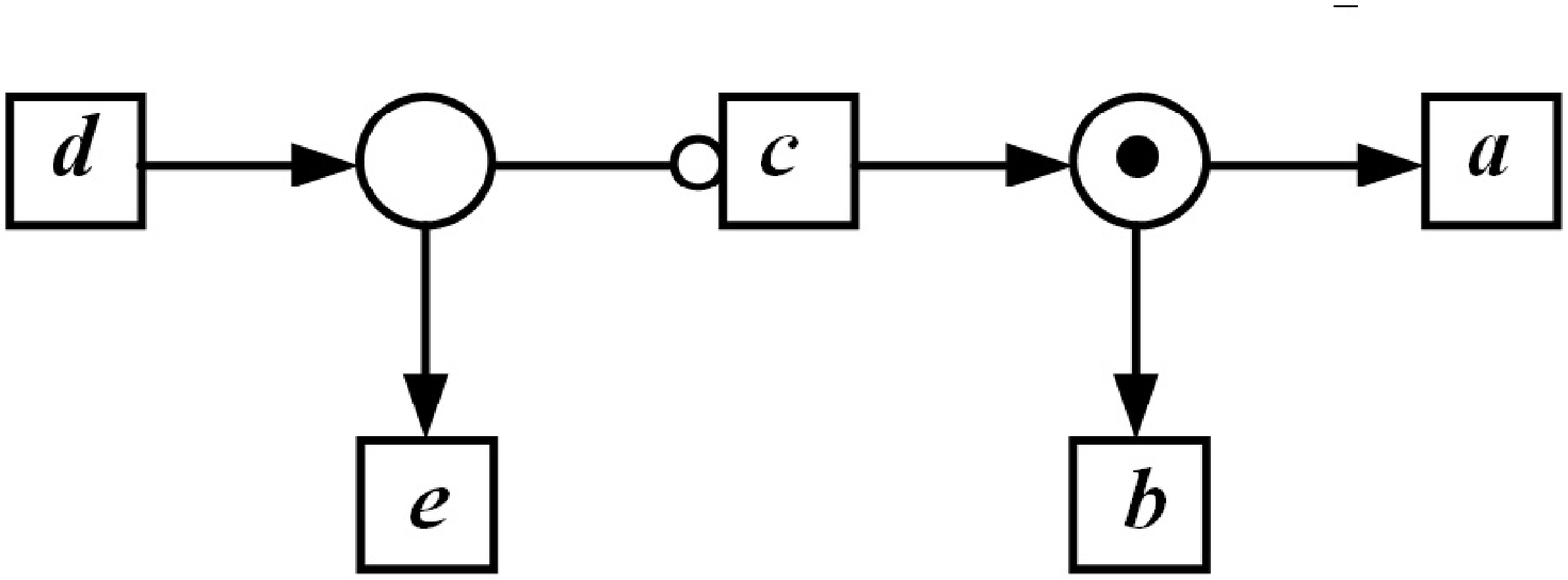}
\caption{An inhibitor net for Ex.\ref{e412}}
\label{Fig3}
\end{figure}

Let us look at the example of Fig. \ref{Fig3}. We can see an inhibitor net and its computation such that for every $\mathrm{k}\in\mathbb{N}$ one can push an action away for a distance greater than k steps.\\
This net is live, hence it is e/l-persistent, but it is not e/l-k-persistent for any $\mathrm{k}\in\mathbb{N}$.\\
In the infinite computation $acbcdaecbcddaeecbcdddaeeecb\ldots$ the first $a$ pushes $b$ away for 1 step, the second - for 2 steps and every k-th $a$ - for k steps.
\end{example}

\subsubsection{\textbullet \ Collapsing of the hierarchy - an effective proof}\mbox{ }\\
\\
Finally, let us recall other decision results of \cite{BarOch}:
\\ \\
\textbf{Transitions Persistence Problems}
\\
\indent\textbf{\emph{Instance:}}P/t-net $\mathrm{S}=(\mathrm{N}, M_0)$, and transitions $a,b\in T, a\neq b$.\\
\indent\textbf{\emph{Questions (informally):}}\\
\indent \indent\indent EE-Persistence Problem: Does $a$ disable an enabled $b$?\\
\indent \indent\indent LL-Persistence Problem: Does $a$ kill a live $b$?\\
\indent \indent \indent EL-Persistence Problem: Does $a$ kill an enabled $b$?\\
\\
From \cite{BarOch} we know that the problems are decidable.

\begin{theorem}
\label{t419}
For a given p/t-net $\mathrm{S}=(\mathrm{N},M_0)$ and a pair of transitions $a,b\in T$ one can calculate a minimum number $\mathrm{k}_{a,b}\in \mathbb{N}$ such that $a$ postpones an enabled~$b$ for at most $\mathrm{k}_{a,b}$ steps (if such a number exists).
\end{theorem}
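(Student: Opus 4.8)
The idea is to combine the decidability results already at hand (the EL-k Transition Persistence Problem, Theorem \ref{t417}) with the purely existential bound from the collapsing theorem (Theorem \ref{t410}), turning the latter into an effective search. First I would dispose of the easy cases: if $b$ is never enabled together with $a$ in any reachable marking — decidable by the Mutual Enabledness Reachability Problem (Theorem \ref{t413}) — then vacuously $k_{a,b}=0$. Likewise, if $a$ kills $b$ (i.e. there is a reachable $M$ with $Ma\land Mb$ but $b$ is dead after $a$) — decidable by the EL-Persistence Problem recalled above — then no finite bound exists and the algorithm reports that $k_{a,b}$ does not exist. So assume from now on that in every reachable marking enabling both $a$ and $b$, the action $a$ only \emph{postpones} $b$, never kills it.

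Under this assumption I claim a finite $k_{a,b}$ exists, and it can be found by unbounded search. The termination guarantee is exactly Theorem \ref{t410} together with Proposition \ref{p416}: since $a$ does not kill $b$, for every minimal marking $M'\in\mathrm{Min}(\mathrm{RE}_{a,b})$ the action $b$ is live after executing $a$ in $M'$, hence $b$ is $k'$-enabled there for some $k'$, so the step $M'a$ is $e/l$-$k'$-persistent; taking the maximum over the (finite, by Lemma \ref{l26}) set $\mathrm{Min}(\mathrm{RE}_{a,b})$ — which is effectively constructible by Proposition \ref{p414} — yields a finite $K$ such that $a$ postpones $b$ by at most $K$ steps in every minimal such marking, and by Proposition \ref{p416} (contrapositive) the same bound $K$ works in \emph{every} reachable marking. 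Thus the search below terminates after at most $K$ rounds. The algorithm: for $k=0,1,2,\dots$, run the decision procedure of Theorem \ref{t417} on the instance $(\mathrm{S},(a,b),k)$; output the first $k$ for which it answers NO (i.e. $a$ does \emph{not} postpone $b$ for more than $k$ steps). That $k$ is by construction the minimum number with the required property, so it equals $k_{a,b}$.

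The one point that needs care — and the main obstacle — is making the dichotomy that drives termination genuinely effective, i.e. deciding \emph{in advance} whether $k_{a,b}$ exists at all, so that the unbounded search is not run on an instance where it would diverge. The clean way is to precede the search by the two decidable tests mentioned above (Theorem \ref{t413} and the EL-Persistence Problem): if $a$ kills an enabled $b$, declare non-existence; otherwise every minimal marking in $\mathrm{Min}(\mathrm{RE}_{a,b})$ leaves $b$ live after $a$, which by Lemma \ref{l49} and Proposition \ref{p416} forces a finite bound, so the search is guaranteed to halt. Alternatively, one can build $\mathrm{Min}(\mathrm{RE}_{a,b})$ explicitly via Proposition \ref{p414}, and for each of its finitely many markings $M'$ check liveness of $b$ in $M'a$; this both decides existence and, when the bound exists, directly yields $K=\max$ over $\mathrm{Min}(\mathrm{RE}_{a,b})$ of the least $k$ witnessing $e/l$-$k$-persistence of the step $M'a$, from which $k_{a,b}\le K$ and then a single pass of the Theorem \ref{t417} procedure for $k=0,\dots,K$ pins down the exact minimum.
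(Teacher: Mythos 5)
Your proposal is correct and follows essentially the paper's own route: the same pre-checks (Mutual Enabledness Reachability, Theorem \ref{t413}, and the EL-Persistence ``kill'' test), the effective construction of $\mathrm{Min}(\mathrm{RE}_{a,b})$ via Proposition \ref{p414}, and monotonicity through Proposition \ref{p416} to transfer the bound from minimal markings to all reachable ones; indeed your ``alternative'' variant is exactly the paper's proof, which reads $\mathrm{k}_{a,b}$ off directly as the maximum over $\mathrm{Min}(\mathrm{RE}_{a,b})$ of the shortest re-enabling distance for $b$ after firing $a$, so your additional linear search through the decision procedure of Theorem \ref{t417} is sound but redundant. The only immaterial divergence is the vacuous case where $a$ and $b$ are never co-enabled, which you settle as $\mathrm{k}_{a,b}=0$ while the paper declares $\mathrm{k}_{a,b}$ nonexistent.
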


\begin{proof}\
\begin{itemize}
\item We check whether both actions $a$ and $b$ are enabled in some reachable marking (using decidability of Mutual Enabledness Reachability Problem). If not, $\mathrm{k}_{a,b}$ does not exist (actions $a$ and $b$ are never enabled at the same time). \
Otherwise:
\begin{itemize}
\item We ask whether $a$ kills an enabled $b$ (EL-Persistence Problem).\\
If YES then $\mathrm{k}_{a,b}$ does not exist ($a$ kills $b$)\\
else:\
\begin{itemize}
\item We compute the set $\mathrm{Min}(\mathrm{RE}_{a,b})$.This set can be effectively computed by Proposition \ref{p414} using Valk/Jantzen algorithm.\
\item We build the initial part of reachability tree of the net $S$ as long as from every $M\in \mathrm{Min}(\mathrm{RE}_{a,b})$ we get a marking $M'$ with the property that a path leads to a~vertex $M'$ (it can be an empty path) such that $M'b$. Clearly, such part of the tree is finite, as we get the whole $\mathrm{Min}(\mathrm{RE}_{a,b})$ and for any $M\in \mathrm{Min}(\mathrm{RE}_{a,b})$ a finite path leading from $M$ to a vertex $M'$ such that $M'b$. The maximum length of such paths is the desired number $\mathrm{k}_{a,b}$. 	
\end{itemize}
\end{itemize}
\end{itemize}
\end{proof}

\begin{theorem}
\label{t420}
If a p/t-net $\mathrm{S}=(\mathrm{N},M_0)$ is e/l-persistent, then it is e/l-$\mathrm{k}$-persistent for some $\mathrm{k}\in \mathbb{N}$ and such a $\mathrm{k}$ can be effectively computed.
\end{theorem}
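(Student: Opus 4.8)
The plan is to combine the non-effective collapse result of Theorem~\ref{t410} with the per-pair effective bound of Theorem~\ref{t419}. Theorem~\ref{t410} already tells us that an e/l-persistent net is e/l-$\mathrm{k}$-persistent for \emph{some} $\mathrm{k}\in\mathbb{N}$; what is left is to exhibit an algorithm that actually outputs such a $\mathrm{k}$. Since a net is e/l-$\mathrm{k}$-persistent (Definition~\ref{d43}) precisely when no action postpones any other enabled action for more than $\mathrm{k}$ steps in any reachable marking, and since the transition set $T$ is finite, it suffices to compute, for each ordered pair $(a,b)\in T\times T$ with $a\neq b$, a bound $\mathrm{k}_{a,b}$ on the delay that $a$ may impose on an enabled $b$, and then to put $\mathrm{k}=\max\{\mathrm{k}_{a,b}\mid a,b\in T,\ a\neq b\}$.

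Concretely, I would fix an ordered pair $(a,b)$, $a\neq b$, and run the algorithm of Theorem~\ref{t419}. It first decides the Mutual Enabledness Reachability Problem (Theorem~\ref{t413}): if $a$ and $b$ are never simultaneously enabled in a reachable marking, this pair imposes no constraint and I set $\mathrm{k}_{a,b}:=0$. Otherwise, because $\mathrm{S}$ is e/l-persistent, $a$ cannot kill an enabled $b$, so the EL-Persistence Problem answers NO and the algorithm continues: it computes the finite set $\mathrm{Min}(\mathrm{RE}_{a,b})$ effectively (Proposition~\ref{p414}) and, for each $M_1\in\mathrm{Min}(\mathrm{RE}_{a,b})$, grows the reachability tree rooted at $M_1a$ until it reaches a vertex $M'$ with $M'b$; the maximum, over all $M_1$, of the length of such a path is the number $\mathrm{k}_{a,b}$. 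Doing this for every ordered pair and taking $\mathrm{k}=\max\mathrm{k}_{a,b}$ yields the claimed number. Correctness of the output follows from Proposition~\ref{p416}: if some reachable marking witnessed a delay of more than $\mathrm{k}$ steps for some pair $(a,b)$, the same would happen from a marking of $\mathrm{Min}(\mathrm{RE}_{a,b})$, forcing $\mathrm{k}_{a,b}>\mathrm{k}$ --- a contradiction. Hence no action postpones any other enabled action for more than $\mathrm{k}$ steps in any reachable marking, i.e.\ $\mathrm{S}$ is e/l-$\mathrm{k}$-persistent (and, in fact, $\mathrm{k}$ is the least such value).

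The main obstacle is to argue that the tree-growing phase inside Theorem~\ref{t419} actually terminates when $\mathrm{S}$ is e/l-persistent, that is, that from each $M_1\in\mathrm{Min}(\mathrm{RE}_{a,b})$ the action $b$ does become re-enabled after finitely many steps of $(N,M_1a)$. This is exactly where e/l-persistence is used: since $a$ does not kill $b$, the action $b$ is live in $M_1a$, and then Lemma~\ref{l49} together with the monotonicity property (Fact~\ref{f27}) --- the ingredients already underlying Theorem~\ref{t410} --- bound the number of steps after which $b$ is re-enabled by the constant $\mathrm{k}_b$ of Lemma~\ref{l49}, uniformly over the finitely many markings of $\mathrm{Min}(\mathrm{RE}_{a,b})$. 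Thus each tree search is finite; one could even take $\mathrm{k}=\max\{\mathrm{k}_b\mid b\in T\}$ outright, but routing the computation through Theorem~\ref{t419} has the advantage of delivering the minimal admissible $\mathrm{k}$. The remaining steps --- the finite maximisation over transition pairs and the appeal to Proposition~\ref{p416} to reduce the check to minimal reachable markings --- are routine.
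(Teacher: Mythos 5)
Your proposal is correct and follows essentially the same route as the paper: compute $\mathrm{k}_{a,b}$ for every ordered pair via Theorem~\ref{t419} and output $\mathrm{k}=\max\{\mathrm{k}_{a,b}\}$, with Proposition~\ref{p416} and Lemma~\ref{l49} supplying correctness and termination just as the paper intends (the paper's proof is merely terser). One small caveat: your aside that one could ``take $\mathrm{k}=\max\{\mathrm{k}_b\mid b\in T\}$ outright'' would not be effective, since the proof of Lemma~\ref{l49} is purely existential and gives no algorithm for the constants $\mathrm{k}_b$ --- effectiveness, not minimality, is the real reason to route through Theorem~\ref{t419}.
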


\begin{proof}
For every pair $(a,b)$ of transitions we find $\mathrm{k}_{a,b}$ defined above. The number we are looking for is $\mathrm{k}=\mathrm{max}(\mathrm{k}_{a,b}: a,b\in T)$. 	
\end{proof}
\mbox{ }\\ \\
We established that an action can not postpone another action (without killing it) indefinitely (Theorem \ref{t410}). We proved, that if a p/t-net is e/l-persistent, then it is e/l-k-persistent for some $\mathrm{k}\in \mathbb{N}$. We showed that such a k exists and we present any algorithm for finding this k. \\ \\

\section{Conclusions}

It is shown in \cite{BarMikOch} that if we change the firing rule in the following way: only e/e-persistent computations are permitted, then we get a new class of nets (we call them \emph{nonviolence nets}) which are computationally equivalent to Turing machines.
We plan to investigate net classes, with firing rules changed (only e/l-k-persistent computations are allowed) and answer the question:
\\ \\
\textbf{Question 1:}\\
What is the computational power of nets created this way?
\\ \\
In this paper, we have investigated the hierarchy of persistence in p/t-nets. We would like to study the hierarchy of e/l-k-persistence in some extensions of p/t-nets, for instance nets with read arcs and reset nets. 
All results of the paper hold for nets with read arcs (\cite{MontanariRossi}), as they can be simulated by classical Petri nets with self-loops with the same reachability set (but with distinct step semantics).
On the contrary, only Lemma \ref{l49} and Theorem \ref{t410} hold (with the same proof) for other extended Petri nets posessing the monotonicity property (e.g. reset, double, transfer nets), but the results supported with the fact of decidability of the Reachability Problem (Proposition \ref{p414}, Theorem \ref{t418}, Theorem \ref{t419}) cannot be applied to those nets, because of undecidability of the Reachability Problem in them (see \cite{Dufourd}).

\bibliography{TCS}{}
\bibliographystyle{plain}

\end{document}